\newtheorem{problem}{\textbf{Problem}}
\newtheorem{definition}{\textbf{Definition}}
\newtheorem{theorem}{\rm\textbf{Theorem}}
\newtheorem{remark}{\rm\textbf{Remark}}
\providecommand{\U}[1]{\protect\rule{.1in}{.1in}}
\begin{document}

\title{{\LARGE \textbf{Adaptive Control Barrier Functions for Safety-Critical Systems}}}
\author{Wei Xiao, Calin Belta and Christos G. Cassandras\thanks{This work was supported in part by
		NSF under grants IIS-1723995, 
		CPS-1446151, ECCS-1509084, DMS-1664644, CNS-1645681, by AFOSR under grant
		FA9550-19-1-0158, by ARPA-E's NEXTCAR program under grant DE-AR0000796 and by
		the MathWorks.}\thanks{The authors are
with the Division of Systems Engineering and Center for Information and
Systems Engineering, Boston University, Brookline, MA, 02446, USA
\texttt{{\small \{xiaowei,cbelta,cgc\}@bu.edu}}}}
\maketitle

\begin{abstract}
Recent work showed that stabilizing affine control systems to desired (sets of) states while optimizing quadratic costs  and observing state and control constraints can be reduced to quadratic programs (QP) by using control barrier functions (CBF) and control Lyapunov functions. In our own recent work, we defined high order CBFs (HOCBFs) to accommodating systems and constraints with arbitrary relative degrees, and a penalty method to increase the feasibility of the corresponding QPs. In this paper, we introduce adaptive CBF (AdaCBFs) that can accommodate time-varying control bounds and dynamics
noise, and also address the feasibility problem. Central to our approach is the introduction of penalty functions in the definition of an AdaCBF and the definition of auxiliary dynamics for these penalty functions that are HOCBFs and are stabilized by CLFs. We demonstrate the advantages of the proposed method by applying it to a cruise control problem with different road surfaces, tires slipping, and dynamics noise. 
\end{abstract}

\thispagestyle{empty} \pagestyle{empty}


\section{INTRODUCTION}
\label{sec:intro}

Barrier functions (BFs) are
Lyapunov-like functions \cite{Tee2009}\cite{Wieland2007}, whose use can be traced back to optimization problems \cite{Boyd2004}. More recently, they have been employed to prove set invariance \cite{Aubin2009}\cite{Prajna2007}\cite{Wisniewski2013} and for multi-objective control \cite{Panagou2013}. In \cite{Tee2009}, it was proved that if a BF for a given set satisfies Lyapunov-like conditions, then the set is forward invariant. A less restrictive form of a BF, which is allowed to grow when far away from the boundary of the set, was proposed in \cite{Aaron2014}.
Another approach that allows a BF to be zero was proposed in \cite{Glotfelter2017} \cite{Lindemann2018}. This simpler form has also been considered in time-varying cases and applied to enforce Signal Temporal Logic (STL) formulas as hard constraints \cite{Lindemann2018}.

Control BFs (CBFs) are extensions of BFs for control systems, and are used to map a constraint that is defined over system states to a constraint on the control input. Recently, it has been shown that, to stabilize an affine control system while optimizing a quadratic cost and satisfying states and control constraints, CBFs can be combined with
control Lyapunov functions (CLFs) \cite{Sontag1983}\cite{Artstein1983}\cite{Freeman1996}\cite{Aaron2012} to form quadratic programs (QPs) \cite{Galloway2013}\cite{Aaron2014}\cite{Glotfelter2017} that are solved in real time.

The CBFs from \cite{Aaron2014} and \cite{Glotfelter2017} work for constraints that have relative degree one with respect to the system dynamics.
A backstepping approach was introduced in \cite{Hsu2015} to address higher relative degree constraints, and it was shown to work for relative degree two. A CBF method for position-based constraints with relative degree two was also proposed in \cite{Wu2015}. A more general form, which works
for arbitrarily high relative degree constraints, was proposed in \cite{Nguyen2016}. The method in \cite{Nguyen2016} employs input-output linearization and finds a pole placement controller with negative poles to stabilize the CBF to zero. Thus, this CBF is an exponential CBF. The high order CBF (HOCBF) that we proposed in \cite{Xiao2019} is simpler and more general than the exponential CBF \cite{Nguyen2016}.
However, the QPs can easily be infeasible when both state constraints (enforced by HOCBFs) and tight control bounds are involved. Although the
penalties involved in the definition of the HOCBF can help to improve feasibility \cite{Xiao2019}, this might not work under time-varying control bounds and dynamics noise. In addition, the HOCBF method is conservative in the sense that
the satisfaction of the HOCBF constraint is only sufficient for the satisfaction of the original constraint, which can limit the system performance.

To improve the problem feasibility under time-varying control bounds and dynamics noise, in this paper we propose adaptive CBFs (AdaCBFs). The proposed AdaCBFs can also help to alleviate the conservativeness of the HOCBF method. Specifically, we introduce penalty functions in the definition of an AdaCBF, and define auxiliary dynamics for these penalty functions that are HOCBFs (such that they are guaranteed to be non-negative) and are stabilized by CLFs. This way, the AdaCBF constraint is relaxed by the penalty functions through the control inputs of the auxiliary dynamics, while the forward invariance property of the HOCBF method is guaranteed. Since the AdaCBF constraint is relaxed through the penalty functions, we show that its satisfaction is a necessary and sufficient condition for the satisfaction of the original constraint, which leads to improvements in the performance of the system.

We formulate optimal control problems with constraints given by AdaCBFs and CLFs, and show the adaptivity of the proposed AdaCBF on an adaptive cruise control (ACC) problem with different and time-varying control bounds (e.g., on different road surfaces and with tires slipping), as well as with dynamics noise. The results clearly demonstrate the advantages of the proposed AdaCBF.

We give preliminaries on HOCBF and CLF in Sec. \ref{sec:pre} before we introduce the AdaCBF in Sec. \ref{sec:hocbf}. We first formulate and then reformulate the ACC problem using AdaCBF in Sec. \ref{sec:ACC} and Sec. \ref{sec:reform}, respectively. Case studies are presented in Sec. \ref{sec:case}, followed by conclusions and final remarks in Sec. \ref{sec:conclusion}.

 
 \section{PRELIMINARIES}
\label{sec:pre}
\begin{definition} \label{def:classk}
	({\it Class $\mathcal{K}$ function} \cite{Khalil2002}) A continuous function $\alpha:[0,a)\rightarrow[0,\infty), a > 0$ is said to belong to class $\mathcal{K}$ if it is strictly increasing and $\alpha(0)=0$. 
\end{definition}

Consider an affine control system of the form
\begin{equation} \label{eqn:affine}
\dot {\bm{x}} = f(\bm x) + g(\bm x)\bm u
\end{equation}
where  $\bm x\in \mathbb{R}^n$, $f:\mathbb{R}^n\rightarrow \mathbb{R}^n$ and $g:\mathbb{R}^n \rightarrow \mathbb{R}^{n\times q}$ are globally Lipschitz, and $\bm u\in U \subset \mathbb{R}^q$ ($U$ denotes the control constraint set). Solutions $\bm x(t)$ of (\ref{eqn:affine}), starting at $\bm x(0)$ (we set the initial time to 0 without loss of generality), $t\geq 0$, are forward complete.

Suppose the control bound $U$ is defined as (the inequality is interpreted componentwise):
\begin{equation} \label{eqn:control}
U:= \{\bm u\in\mathbb{R}^q: \bm u_{min}\leq\bm u \leq \bm u_{max}\},
\end{equation}
with $\bm u_{min},\bm u_{max}\in \mathbb{R}^q$.

\begin{definition} \label{def:forwardinv}
	A set $C\subset\mathbb{R}^n$ is forward invariant for system (\ref{eqn:affine}) if its solutions starting at any $\bm x(0) \in C$ satisfy $\bm x(t)\in C$ for $\forall t\geq 0$.
\end{definition} 

\begin{definition} \label{def:relative}
	({\it Relative degree})
	The relative degree of a (sufficiently many times) differentiable function $b:\mathbb{R}^n\rightarrow \mathbb{R}$ with respect to system (\ref{eqn:affine}) is the number of times we need to differentiate it along its dynamics until the control $\bm u$ explicitly shows in the corresponding derivative. 
\end{definition}

In this paper, since function $b$ is used to define a constraint $b(\bm x)\geq 0$, we will also refer to the relative degree of $b$ as the relative degree of the constraint.

For a constraint $b(\bm x)\geq 0$ with relative degree $m$, $b: \mathbb{R}^n \rightarrow \mathbb{R}$, and $\psi_0(\bm x) := b(\bm x)$, we define a sequence of functions  $\psi_i: \mathbb{R}^n \rightarrow \mathbb{R}, i\in\{1,2,\dots,m\}$:
\begin{equation} \label{eqn:functions}
\begin{aligned}
\psi_i(\bm x) := \dot \psi_{i-1}(\bm x) + \alpha_i(\psi_{i-1}(\bm x)),i\in\{1,2,\dots,m\},
\end{aligned}
\end{equation}
where $\alpha_i(\cdot),i\in\{1,2,\dots,m\}$ denotes a $(m-i)^{th}$ order differentiable class $\mathcal{K}$ function.

We further define a sequence of sets $C_i, i\in\{1,2,\dots,m\}$ associated with (\ref{eqn:functions}) in the form:
\begin{equation} \label{eqn:sets}
\begin{aligned}
C_i := \{\bm x \in \mathbb{R}^n: \psi_{i-1}(\bm x) \geq 0\}, i\in\{1,2,\dots,m\}.
\end{aligned}
\end{equation}

\vspace{2ex}
\begin{definition} \label{def:hocbf}
	({\it High Order Control Barrier Function (HOCBF)} \cite{Xiao2019}) Let $C_1, C_2,\dots, C_{m}$ be defined by (\ref{eqn:sets}) and $\psi_1(\bm x), \psi_2(\bm x),\dots, \psi_{m}(\bm x)$ be defined by (\ref{eqn:functions}). A function $b: \mathbb{R}^n\rightarrow \mathbb{R}$ is a high order control barrier function (HOCBF) of relative degree $m$ for system (\ref{eqn:affine}) if there exist $(m-i)^{th}$ order differentiable class $\mathcal{K}$ functions $\alpha_i,i\in\{1,2,\dots,m-1\}$  and a class $\mathcal{K}$ function $\alpha_{m}$ such that
	\begin{equation}\label{eqn:constraint}
	\begin{aligned}
	L_f^{m}b(\bm x) + L_gL_f^{m-1}b(\bm x)\bm u + S(b(\bm x)) + \alpha_m(\psi_{m-1}(\bm x)) \geq 0,
	\end{aligned}
	\end{equation}
	for all $\bm x\in C_1 \cap C_2\cap,\dots, \cap C_{m}$. In (\ref{eqn:constraint}), $L_f^m$ ($L_g$) denotes Lie derivatives along $f$ ($g$) $m$ (one) times, $S(\cdot)$ denotes the remaining Lie derivatives along $f$ with degree less than or equal to $m-1$ (omitted for simplicity, see \cite{Xiao2019}).
\end{definition}

The HOCBF is a general form of the relative degree one CBF \cite{Aaron2014} \cite{Glotfelter2017} \cite{Lindemann2018} (set $m = 1$ in the HOCBF, and we can show that the CBF form in \cite{Aaron2014} is the same as \cite{Glotfelter2017} \cite{Lindemann2018}), and it is also a general form of the exponential CBF \cite{Nguyen2016} for high relative degree constraints (define all the class $\mathcal{K}$ functions in linear form in Def. \ref{def:hocbf}).

\vspace{2ex}
Given a HOCBF $b$, we define the set of all control values that satisfy (\ref{eqn:constraint}) as:
\begin{equation}
\begin{aligned}
K_{cbf} = \{\bm u\in U: L_f^{m}b(\bm x) + L_gL_f^{m-1}b(\bm x)\bm u \\ + S(b(\bm x))  + \alpha_m(\psi_{m-1}(\bm x)) \geq 0\}
\end{aligned}
\end{equation}

\vspace{2ex}
\begin{theorem} \label{thm:hocbf}
	(\cite{Xiao2019}) Given a HOCBF $b(\bm x)$ from Def. \ref{def:hocbf} with the associated sets $C_1, C_2,\dots, C_{m}$ defined by (\ref{eqn:sets}), if $\bm x(0) \in C_1 \cap C_2\cap,\dots,\cap C_{m}$, then any Lipschitz continuous controller $\bm u(t)\in K_{cbf}, \forall t\geq 0$ renders
	$C_1$ $\cap  C_2\cap,\dots, \cap C_{m}$ forward invariant for system (\ref{eqn:affine}).
\end{theorem}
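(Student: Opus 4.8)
The plan is to turn the single algebraic inequality defining $K_{cbf}$ into the assertion $\psi_m(\bm x(t)) \ge 0$ along the closed-loop trajectory, and then to propagate nonnegativity downward through the chain $\psi_m, \psi_{m-1}, \dots, \psi_1, \psi_0$ by a scalar comparison argument, one class-$\mathcal{K}$ function at a time. First I would note that a Lipschitz controller $\bm u(t)\in K_{cbf}$, together with the global Lipschitz continuity of $f,g$ and forward completeness, produces a well-defined, (locally absolutely) continuous solution $\bm x(t)$ on $[0,\infty)$, and that substituting $\dot{\bm x}=f(\bm x)+g(\bm x)\bm u$ into the definition $\psi_m(\bm x)=\dot\psi_{m-1}(\bm x)+\alpha_m(\psi_{m-1}(\bm x))$ reproduces exactly the left-hand side of (\ref{eqn:constraint}). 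Hence the hypothesis $\bm u(t)\in K_{cbf}$ for all $t\ge 0$ is precisely
\[
\frac{d}{dt}\psi_{m-1}(\bm x(t))+\alpha_m\big(\psi_{m-1}(\bm x(t))\big)\ \ge\ 0 ,\qquad \forall t\ge 0 ,
\]
i.e.\ $\psi_m(\bm x(t))\ge 0$; note this does not presuppose that $\bm x(t)$ stays in the intersection.

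The second ingredient is a scalar fact: if $y$ is (locally absolutely) continuous, $y(0)\ge 0$, and $\dot y \ge -\alpha(y)$ for a class $\mathcal{K}$ function $\alpha$, then $y(t)\ge 0$ for all $t\ge 0$. This follows from the comparison lemma applied to $\dot z = -\alpha(z)$, $z(0)=y(0)$: since $\alpha(0)=0$, the solution $z$ is non-increasing and bounded below by $0$, so $y(t)\ge z(t)\ge 0$. Applying this with $\alpha=\alpha_m$ and $y(\cdot)=\psi_{m-1}(\bm x(\cdot))$, and using $\bm x(0)\in C_m$ so that $\psi_{m-1}(\bm x(0))\ge 0$, yields $\psi_{m-1}(\bm x(t))\ge 0$ for all $t\ge 0$, i.e.\ $\bm x(t)\in C_m$.

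The induction step iterates this downward: for $i$ from $m-1$ to $1$, definition (\ref{eqn:functions}) gives $\psi_i(\bm x)=\dot\psi_{i-1}(\bm x)+\alpha_i(\psi_{i-1}(\bm x))$, so along the trajectory $\frac{d}{dt}\psi_{i-1}(\bm x(t))+\alpha_i(\psi_{i-1}(\bm x(t)))=\psi_i(\bm x(t))\ge 0$, where nonnegativity of $\psi_i(\bm x(t))$ is the conclusion of the previous step. Combining this with $\psi_{i-1}(\bm x(0))\ge 0$ (from $\bm x(0)\in C_i$) and the scalar fact above gives $\psi_{i-1}(\bm x(t))\ge 0$, i.e.\ $\bm x(t)\in C_i$, for all $t\ge 0$. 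Carrying the chain down to $i=1$ also gives $\psi_0(\bm x(t))=b(\bm x(t))\ge 0$, hence $\bm x(t)\in C_1\cap C_2\cap\cdots\cap C_m$ for all $t\ge 0$, which is the claimed forward invariance.

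I expect the only genuinely delicate point to be the scalar comparison step: class $\mathcal{K}$ functions are a priori merely continuous and defined only on $[0,a)$, so to invoke the comparison lemma cleanly one must either extend each $\alpha_i$ to an (extended) class-$\mathcal{K}$ function near $0$, or supply enough regularity (local Lipschitzness, consistent with the stated differentiability of the $\alpha_i$) to justify a boundary non-crossing (Nagumo-type) argument. A secondary, routine point is checking that the derivatives defining $\psi_1,\dots,\psi_{m-1}$ are well defined along $\bm x(t)$, which is exactly why $b$ is taken sufficiently smooth and $\alpha_i$ is required to be $(m-i)$-times differentiable for $i<m$.
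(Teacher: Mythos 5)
Your proof is correct and is essentially the standard argument: the paper itself defers this result to \cite{Xiao2019} without reproducing a proof, and the argument there (and in the paper's own proof of Theorem \ref{thm:acbf}, which recursively propagates $\psi_i \geq 0$ down the chain) is exactly your combination of ``the $K_{cbf}$ inequality is $\psi_m(\bm x(t))\geq 0$'' with a downward induction via the scalar comparison lemma. Your closing caveats about extending the class $\mathcal{K}$ functions (or using a Nagumo-type boundary argument) and about the smoothness needed for the $\psi_i$ to be differentiable along the trajectory are the right points of care and are consistent with the hypotheses of Def. \ref{def:hocbf}.
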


\begin{definition}  \label{def:clf}
	({\it Control Lyapunov function (CLF)} \cite{Aaron2012}) A continuously differentiable function $V: \mathbb{R}^n\rightarrow \mathbb{R}$ is a globally and exponentially stabilizing control Lyapunov function (CLF) for system (\ref{eqn:affine}) if there exist constants $c_1 >0, c_2>0, c_3>0$ such that
	\begin{equation}
	c_1||\bm x||^2 \leq V(\bm x) \leq c_2 ||\bm x||^2
	\end{equation}
	\begin{equation}\label{eqn:clf}
	\underset{u\in U}{inf} \lbrack L_fV(\bm x)+L_gV(\bm x) \bm u + c_3V(\bm x)\rbrack \leq 0.
	\end{equation}
	for $\forall \bm x\in \mathbb{R}^n$.
\end{definition}

\begin{theorem} \label{thm:clf}
	(\cite{Aaron2012}) Given a CLF $V$ as in Def. \ref{def:clf}, any Lipschitz continuous controller $ \bm u \in K_{clf}(\bm x)$, with
	$$K_{clf}(\bm x) := \{\bm u\in U: L_fV(\bm x)+L_gV(\bm x) \bm u + c_3V(\bm x) \leq 0\},$$
	exponentially stabilizes system (\ref{eqn:affine}) to its zero dynamics (defined by the dynamics of the internal part if we transform the system to standard form and set the output to zero \cite{Khalil2002}). 
\end{theorem}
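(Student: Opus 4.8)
The plan is to reduce the claim to a scalar differential inequality for $V$ along closed-loop trajectories and then invoke the comparison lemma. First I would fix a Lipschitz continuous feedback $\bm u(\cdot)$ with $\bm u(t)\in K_{clf}(\bm x(t))$ for all $t\ge 0$; since $f$ and $g$ are globally Lipschitz and $\bm u$ is Lipschitz, the closed-loop vector field $f(\bm x)+g(\bm x)\bm u$ is Lipschitz, so the solution $\bm x(t)$ of (\ref{eqn:affine}) exists, is unique, and is forward complete by the standing assumption on (\ref{eqn:affine}). Because $V$ is continuously differentiable and $\bm x(\cdot)$ is absolutely continuous, $t\mapsto V(\bm x(t))$ is absolutely continuous with $\frac{d}{dt}V(\bm x(t)) = L_fV(\bm x(t)) + L_gV(\bm x(t))\bm u(t)$ for a.e.\ $t\ge 0$.

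Next, the membership $\bm u(t)\in K_{clf}(\bm x(t))$ gives $L_fV(\bm x(t)) + L_gV(\bm x(t))\bm u(t) + c_3 V(\bm x(t)) \le 0$, hence $\dot V(\bm x(t)) \le -c_3 V(\bm x(t))$ for a.e.\ $t\ge 0$. The comparison lemma (\cite{Khalil2002}) then yields $V(\bm x(t)) \le V(\bm x(0))\, e^{-c_3 t}$ for all $t\ge 0$.

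Finally I would invoke the quadratic sandwich $c_1\|\bm x\|^2 \le V(\bm x)\le c_2\|\bm x\|^2$ from Def.\ \ref{def:clf}: for all $t\ge 0$,
\begin{equation*}
c_1\|\bm x(t)\|^2 \le V(\bm x(t)) \le e^{-c_3 t}V(\bm x(0)) \le c_2 e^{-c_3 t}\|\bm x(0)\|^2,
\end{equation*}
so $\|\bm x(t)\| \le \sqrt{c_2/c_1}\; e^{-(c_3/2)t}\|\bm x(0)\|$, i.e.\ exponential convergence of $\bm x$ to the origin. When $V$ is, as in \cite{Aaron2012}, a function of the transverse (external) coordinates obtained after putting the system in standard form and setting the output to zero, the same estimate reads as exponential stabilization to the zero-dynamics manifold, which is the stated conclusion.

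The substance here is entirely in the comparison step; the remaining points are bookkeeping: one must verify that $V\circ\bm x$ is differentiable a.e.\ with the stated derivative (guaranteed by the $C^1$ regularity of $V$ and absolute continuity of $\bm x$), and one must take care with the meaning of ``stabilizes to its zero dynamics'' versus the positive-definite bounds in Def.\ \ref{def:clf} — a matter of recalling the standard-form decomposition rather than a new estimate. Note that the existence of a Lipschitz selection realizing a point of $K_{clf}(\bm x)$ (Sontag's universal formula, or a partition-of-unity argument) is \emph{not} needed, since the theorem already hands us such a controller.
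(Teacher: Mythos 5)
Your argument is correct and is exactly the standard comparison-lemma proof of this result; the paper itself states Theorem~\ref{thm:clf} as a cited preliminary from \cite{Aaron2012} without reproducing a proof, and the proof in that reference proceeds the same way ($\dot V \leq -c_3 V$, Gr\"onwall/comparison, then the quadratic bounds $c_1\|\bm x\|^2 \leq V \leq c_2\|\bm x\|^2$). Your closing remark correctly identifies the one loose end in the statement --- that the bounds in Def.~\ref{def:clf} are written in the full state while the conclusion is phrased as convergence to the zero dynamics, which is reconciled by taking $V$ to depend on the transverse coordinates of the standard form.
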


Note that (\ref{eqn:clf}) can be relaxed by replacing 0 with a relaxation variable at its right-hand side, and we wish to minimize this relaxation variable \cite{Aaron2012}. Recent works \cite{Aaron2014}, \cite{Lindemann2018}, \cite{Nguyen2016} combine CBFs and CLFs with quadratic costs to form optimization problems. Time is discretized and an optimization problem with constraints given by CBFs and CLFs is solved at each time step. Note that these constraints are linear in control since the state is 
fixed at the value at the beginning of the interval, and therefore the optimization problem is a quadratic program (QP). The optimal control obtained by solving the QP is applied 
at the current time step and held constant for the whole interval. The dynamics (\ref{eqn:affine}) are updated, and the procedure is repeated. This method works conditioned on the fact that the QP at every time step is feasible. However, this is not guaranteed, in particular under tight control bounds. In this paper, we show how the QP feasibility can be improved by using adaptive CBFs.

\section{ADAPTIVE CONTROL BARRIER FUNCTIONS}
\label{sec:hocbf}

In this section, we define adaptive CBFs (AdaCBFs). We use a simple example to motivate the need for such functions and to illustrate the main ideas. 

\subsection{Example: Simplified Adaptive Cruise Control}
\label{sec:hocbf:moti}

Consider the simplified adaptive cruise control (SACC) problem\footnote{A more realistic version of this problem, called the adaptive cruise control problem (ACC), is defined in Sec.  \ref{sec:ACC}.} with the ego (controlled) vehicle dynamics in the form:
\begin{equation} \label{eqn:simpledynamics}
\left[\begin{array}{c} 
\dot x(t)\\
\dot v(t)
\end{array} \right]=
\left[\begin{array}{c}  
v(t)\\
0
\end{array} \right] + 
\left[\begin{array}{c}  
0\\
1
\end{array} \right]u(t),
\end{equation}
where $x(t)$ and $v(t)$ denote the position and velocity of the controlled vehicle along its lane, respectively, and $u(t)$ is its control.

We also have control constraints:
\begin{equation}\label{eqn:scons}
u_{min} \leq u(t) \leq u_{max}, \forall t\geq 0,
\end{equation}
where $u_{min}<0$ and $u_{max} > 0$ are the minimum and maximum control input, respectively.

We require that the distance between the ego vehicle and its immediately preceding vehicle
(the coordinates $x(t)$ of the ego vehicle and $x_{p}(t)$ of the preceding vehicle, respectively, are measured from the same origin and $x_{p}(t) \geq x(t), \forall t\geq 0$)  
be greater than $\delta_0 > 0$, i.e.,
\begin{equation} \label{eqn:safety}
x_{p}(t) - x(t) \geq \delta_0,\forall t\geq 0.
\end{equation}

Assume the preceding vehicle runs at constant speed $v_0$. Let $\bm x(t) := (x(t),v(t))$ and $b(\bm x(t)) := x_p(t) - x(t) - \delta_0$. The relative degree of $b(\bm x(t))$ is 2, so we choose a HOCBF with $m = 2$. We define $\psi_0(\bm x(t)):= b(\bm x(t))$, $\alpha_1(\psi_0(\bm x(t))):=\psi_0(\bm x(t))$ and $\alpha_2(\psi_1(\bm x(t))) := \psi_1(\bm x(t))$ in Def. \ref{def:hocbf} and find a control for the ego vehicle such that the constraint (\ref{eqn:safety}) is satisfied. The control $u(t)$ should satisfy:
\begin{equation}\label{eqn:safety_ex2}
\begin{aligned}
 \underbrace{0}_{L_f^2b(\bm x(t))} + \underbrace{-1}_{L_gL_fb(\bm x(t))}\times u(t) + \underbrace{v_0 - v(t)}_{S(b(\bm x(t)))} \\+  \underbrace{v_0 - v(t) +x_p(t) - x(t) - \delta_0}_{\alpha_2(\psi_1(\bm x(t)))} \geq 0.
\end{aligned}
\end{equation}

Suppose we wish to minimize $\int_0^T u^2(t)$. We can then use the QP based method introduced at the end of the last section to solve this SACC problem. However, the HOCBF constraint (\ref{eqn:safety_ex2}) can easily conflict with $u_{min}$ in (\ref{eqn:scons}) when the two vehicles get close to each other, as shown in \cite{Xiao2019}. When this happens, the QP will be infeasible. We can use the penalty method from \cite{Xiao2019} to improve the QP feasibility, i.e., by adding a positive constant penalty $p \in \mathbb{R}$ on both $\alpha_1(\cdot), \alpha_2(\cdot)$. Then the control $u(t)$ should satisfy:
\begin{equation}\label{eqn:safety_pe}
\begin{aligned}
\underbrace{0}_{L_f^2b(\bm x(t))} + \underbrace{-1}_{L_gL_fb(\bm x(t))}\times u(t) + \underbrace{p(v_0 - v(t))}_{S(b(\bm x(t)))} \\+  \underbrace{p(v_0 - v(t)) +p^2(x_p(t) - x(t) - \delta_0)}_{p\alpha_2(\psi_1(\bm x(t)))} \geq 0.
\end{aligned}
\end{equation}

Given $u_{min}$, we can find a small enough value for $p$ such that (\ref{eqn:safety_pe}) will not conflict with $u_{min}$ in (\ref{eqn:scons}), i.e., the QP is always feasible. However, the lower bound of the control $u_{min}$ is not a constant in reality. The value of $u_{min}$ depends on the weather condition and the road surface roughness, etc.. For example, the vehicle maximum braking force (directly corresponding to $u_{min}$) in a rainy day is usually smaller than the one in a sunny day. When we choose a proper $p$ for the HOCBF constraint (\ref{eqn:safety_pe}) in the sunny day such that the QP in the SACC problem is always feasible, the QP may be infeasible in the rainy day. The unknown roughness of the road surface can further make the choice of the $p$ value difficult. Moreover, the assumption of a constant speed $v_0$ for the front vehicle is too strong, and there are also vehicle dynamics noise that could make the QP infeasible. This motivates us to define an AdaCBF that works for time-varying control bounds and dynamics noise (i.e., the QP is always feasible).

\subsection{Adaptive Control Barrier Function (AdaCBF)}
\label{sec:hobf}

We consider a function that defines an invariant set for system (\ref{eqn:affine}). For a relative degree $m$ function $b: \mathbb{R}^n \rightarrow \mathbb{R}$, let $\psi_0(\bm x) := b(\bm x)$. Instead of using a constant penalty $p_i > 0, i\in \{1,2,\dots,m\}$ for each class $\mathcal{K}$ function $\alpha_i(\cdot)$ for the penalty method \cite{Xiao2019} in the definition of a HOCBF, we define $p_i, i\in \{1,2,\dots,m\}$ as a \textbf{penalty function} of time $p_i(t)\geq 0$ and multiply it to each class $\mathcal{K}$ function $\alpha_i(\cdot)$, respectively. Let $\bm p(t):=(p_1(t), p_2(t),\dots, p_m(t))$. Thus, we define a sequence of functions $\psi_i: \mathbb{R}^n\times \mathbb{R}^m \rightarrow \mathbb{R}, i\in \{1,2,\dots,m\}$ in the form:
\begin{equation} \label{eqn:adafunc}
\begin{aligned}
\psi_1(\bm x, \bm p(t)) &:= \dot \psi_{0}(\bm x) + p_1(t)\alpha_1(\psi_{0}(\bm x)),\\
\psi_i(\bm x, \bm p(t)) &:= \dot \psi_{i-1}(\bm x, \bm p(t)) + p_i(t)\alpha_i(\psi_{i-1}(\bm x,\bm p(t))),\\ &\qquad \qquad \qquad\qquad\qquad\quad\;\; i\in \{2,\dots,m\},
\end{aligned}
\end{equation}
where $\alpha_i(\cdot), i\in \{1,2,\dots,m-1\}$ is a $(m-i)^{th}$ order differentiable class $\mathcal{K}$ function, $\alpha_{m}(\cdot)$ is a class $\mathcal{K}$ function.

 We further define a sequence of sets $C_i, i\in \{1,2,\dots,m\}$ associated with (\ref{eqn:adafunc}) in the form:
\begin{equation} \label{eqn:adasets}
\begin{aligned}
C_1 &:= \{\bm x \in \mathbb{R}^n: \psi_0(\bm x) \geq 0\},\\
C_i &:= \{(\bm x, \bm p(t)) \in \mathbb{R}^n\times \mathbb{R}^{m}: \psi_{i-1}(\bm x,\bm p(t)) \geq 0\},
\end{aligned}
\end{equation}
where $i\in \{2,\dots,m\}$.

The remaining question is how to choose $p_i(t), i\in \{1,2,\dots,m\}$. As shown in the definition of a HOCBF, $p_i(t), i\in \{1,2,\dots,m-1\}$ will be differentiated $m-i$ times, while $p_{m}(t)$ with not be differentiated (so we can just set $p_{m}(t)$ as a variable). Let $\bm p_i(t) := (p_i(t), p_{i,2}(t), \dots, p_{i,m-i}(t)) \in\mathbb{R}^{m-i} , i\in \{1,2,\dots,m-2\}$ where $p_{i,j} \in \mathbb{R}, j\in\{2,3,\dots,m-i\}$ denotes a new defined state variable that will be used later ($\bm p_{m-1}(t) := p_{m-1}(t)\in\mathbb{R}$ is a single variable, and we need to differentiate it once). We define input-output linearizable auxiliary dynamics (take $p_i(t)$ as the output $y_i(t)\in\mathbb{R}$ with relative degree $m-i$) for each $p_i$ (we skip the time variable $t$ from now on for simplicity) in the form:
\begin{equation} \label{eqn:assist}
\begin{aligned}
\dot {\bm{p}}_i &= F_i(\bm p_i) + G_i(\bm p_i) \nu_i, i\in \{1,2,\dots,m-1\},\\
y_i &= p_i,
\end{aligned}
\end{equation}
where $F_i:\mathbb{R}^{m-i}\rightarrow \mathbb{R}^{m-i}, G_i:\mathbb{R}^{m-i}\rightarrow \mathbb{R}^{m-i}$, $\nu_i \in \mathbb{R}$ denotes the control input for the auxiliary dynamics (\ref{eqn:assist}). Note that the auxiliary dynamics (\ref{eqn:assist}) are defined in general form. For simplicity, we can just define the auxiliary dynamics (\ref{eqn:assist}) in linear form. For example, we define $\dot p_{m-2} = p_{m-2,2}, \dot p_{m-2,2} = \nu_{m-2}$ (since we need to differentiate $p_{m-2}$ twice as shown in Def. \ref{def:hocbf}), and define $\dot p_{m-1} = \nu_{m-1}$ (since we need to differentiate $p_{m-1}$ once). Theoretically, we can initialize $\bm p_i(0)$ to any real number vector as long as $p_i(0) > 0$.

\begin{definition} \label{def:acbf}
	Let $C_i, i\in \{1,\dots,m\}$ be defined by (\ref{eqn:adasets}), $\psi_i(\bm x, \bm p), i\in \{1,\dots,m\}$ be defined by (\ref{eqn:adafunc}), and the auxiliary dynamics be defined by (\ref{eqn:assist}). A function $b: \mathbb{R}^n\rightarrow \mathbb{R}$ is an adaptive control barrier function (AdaCBF) with relative degree $m$ for (\ref{eqn:affine}) if every $p_i, \forall i\in\{1,2,\dots,m-1\}$ is a HOCBF with relative degree $m-i$ for the auxiliary dynamics (\ref{eqn:assist}), and there exist $(m-i)^{th}, i\in\{1,2,\dots,m-1\}$ order differentiable class $\mathcal{K}$ functions $\alpha_i$, a class $\mathcal{K}$ function $\alpha_{m}$ such that 
\begin{equation}\label{eqn:acbf}
\begin{aligned}
L_f^{m}b(\bm x) + L_gL_f^{m-1}b(\bm x)\bm u + \sum_{i=1}^{m-1}L_{G_i}L_{F_i}^{m-i-1}p_i\alpha_i(\psi_{i-1})\nu_i + \\\sum_{i=1}^{m-1}L_{F_i}^{m-i}p_i\alpha_i(\psi_{i-1})+ R(b(\bm x),\bm p) + p_m\alpha_m(\psi_{m-1}) \geq 0,
\end{aligned}
\end{equation}
for all $\bm x\in C_1, (\bm x, \bm p)\in C_2\cap,\dots, \cap C_{m}$, and all $p_{m} \geq 0$. In (\ref{eqn:acbf}), $R(b(\bm x), \bm p)$ denotes the remaining Lie derivative terms of $b(\bm x)$ (or $\bm p$) along $f$ (or $F_i, i\in\{1,2,\dots, m-1\}$) with degree less than $m$ (or $m-i$). These terms are skipped for simplicity but examples can be found in the revisited example or in the next section.
\end{definition} 

Let $\bm \nu := (\nu_1,\nu_2,\dots,\nu_{m-1})$, where $\nu_i, i\in\{1,2,\dots,m-1\}$ comes from the auxiliary dynamics (\ref{eqn:assist}). Note that $\bm \nu$ is constrained by the HOCBF constraints (defined in the following constraint set) for each $p_i\geq 0, i\in\{1,2,\dots,m-1\}$ since $p_i$ is a HOCBF with relative degree $m-i$ for (\ref{eqn:assist}). We define a constraint set $U_{cbf}$ for $\bm \nu$ in the form:
\begin{equation}\label{eqn:cbf}
\begin{aligned}
U_{cbf} = \{\bm \nu\in\mathbb{R}^{m-1}: L_{F_i}^{m-i}p_i + L_{G_i}L_{F_i}^{m-i-1}p_i\nu_i + S(\bm p_i) \\+ \alpha_{m-i}(\psi_{i,m-i-1}(\bm p_i)) \geq 0, \forall i\in\{1,2,\dots,m-1\}\},
\end{aligned}
\end{equation}
where $\psi_{i,m-i-1}(\cdot)$ is defined in (\ref{eqn:functions}) by replacing $b$ with $p_i$.

Given an AdaCBF $b(\bm x)$, we consider all control values $(\bm u, \bm \nu)\in U\times U_{cbf}$ that satisfy:
\begin{equation}
\begin{aligned}
K_{acbf}(\bm x, \bm p) = \{(\bm u, \bm \nu)\in U\times U_{cbf}:
 L_gL_f^{m-1}b(\bm x)\bm u \\+ \sum_{i=1}^{m-1}L_{G_i}L_{F_i}^{m-i-1}p_i\alpha_i(\psi_{i-1})\nu_i + \sum_{i=1}^{m-1}L_{F_i}^{m-i}p_i\alpha_i(\psi_{i-1})\\ + L_f^{m}b(\bm x) + R(b(\bm x),\bm p) + p_m\alpha_m(\psi_{m-1}) \geq 0,
\}.
\end{aligned}
\end{equation}

\vspace{2mm}
\begin{theorem} \label{thm:acbf}
	Given an AdaCBF $b(\bm x)$ from Def. \ref{def:acbf} with the associated sets $C_1, C_2,\dots, C_{m}$ defined by (\ref{eqn:adasets}), if $\bm x(0) \in C_1$ and $(\bm x(0), \bm p(0)) \in  C_2\cap,\dots,\cap C_{m}$, then any Lipschitz continuous controller $(\bm u(t), \bm \nu(t))\in K_{acbf}(\bm x(t), \bm p(t)), \forall t\geq 0$ renders the set $C_1$ forward invariant for system (\ref{eqn:affine}) and  $ C_2\cap,\dots, \cap C_{m}$ forward invariant for systems (\ref{eqn:affine}), (\ref{eqn:assist}).
\end{theorem}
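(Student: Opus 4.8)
\emph{Proof proposal.} The plan is to reuse the Nagumo/comparison-lemma argument behind Theorem \ref{thm:hocbf}, but carried out on the combined state $(\bm x,\bm p)$ evolving under (\ref{eqn:affine}) and (\ref{eqn:assist}), with the extra ingredient that the time-varying multipliers $p_i(t)$ are themselves kept non-negative by the auxiliary HOCBF constraints.

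First I would establish non-negativity of the penalty functions. Since $(\bm u(t),\bm\nu(t))\in K_{acbf}(\bm x(t),\bm p(t))\subseteq U\times U_{cbf}$, the component $\bm\nu(t)$ lies in $U_{cbf}$ for all $t$, i.e. for each $i\in\{1,\dots,m-1\}$ the input $\nu_i(t)$ satisfies the HOCBF constraint associated with $p_i$ for the auxiliary dynamics (\ref{eqn:assist}). By Def. \ref{def:acbf} each $p_i$ is a HOCBF of relative degree $m-i$ for (\ref{eqn:assist}), and $\bm p_i(0)$ is chosen to lie in the corresponding intersection of level sets (this is where the freedom to pick the auxiliary dynamics and the class $\mathcal{K}$ functions of $p_i$ enters; in the linear case $p_i(0)>0$ suffices). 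Applying Theorem \ref{thm:hocbf} to each $p_i$ then yields $p_i(t)\geq 0$ for all $t\geq 0$ and all $i\in\{1,\dots,m-1\}$; together with the constraint $p_m\geq 0$ imposed directly in Def. \ref{def:acbf}, we obtain $\bm p(t)\geq 0$ componentwise along the whole trajectory.

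Next I would show that, along trajectories, the AdaCBF inequality (\ref{eqn:acbf}) is precisely $\psi_m(\bm x(t),\bm p(t))\geq 0$, where $\psi_m:=\dot\psi_{m-1}+p_m\alpha_m(\psi_{m-1})$. This is the bookkeeping step: differentiating $\psi_{m-1}(\bm x,\bm p)$ along (\ref{eqn:affine})--(\ref{eqn:assist}) and using the product rule on each term $p_i\alpha_i(\psi_{i-1})$, the control $\bm u$ appears only through $L_gL_f^{m-1}b(\bm x)\bm u$ (relative degree $m$ of $b$), and $\nu_i$ appears only in the term where all $m-i$ derivatives fall on $p_i$, producing the coefficient $L_{G_i}L_{F_i}^{m-i-1}p_i\,\alpha_i(\psi_{i-1})$ (since $p_i$ has relative degree $m-i$ for (\ref{eqn:assist}), its first $m-i-1$ derivatives carry no $\nu_i$). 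All remaining mixed and lower-order terms, which depend on $\bm x$ and $\bm p$ but not on $\bm u$ or $\bm\nu$, are collected into $R(b(\bm x),\bm p)$; matching with (\ref{eqn:acbf}) gives the identity. I expect this identification to be the main obstacle, as it requires care with the chained auxiliary dynamics, but it is structurally identical to the relative-degree computation underlying Def. \ref{def:hocbf}.

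Finally I would close by a downward induction on $i$. Membership in $K_{acbf}$ together with the previous step gives $\dot\psi_{m-1}(\bm x(t),\bm p(t))\geq -p_m(t)\alpha_m(\psi_{m-1}(\bm x(t),\bm p(t)))$; since $p_m(t)\geq 0$, $\alpha_m(0)=0$, and $\psi_{m-1}(\bm x(0),\bm p(0))\geq 0$ because $(\bm x(0),\bm p(0))\in C_m$, the comparison lemma (exactly as in the proof of Theorem \ref{thm:hocbf}, now with a continuous time-varying multiplier $p_m(t)$) yields $\psi_{m-1}(\bm x(t),\bm p(t))\geq 0$, i.e. $(\bm x(t),\bm p(t))\in C_m$ for all $t\geq 0$. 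Repeating the argument for $i=m-1,\dots,2$, each time using $\psi_i\geq 0\Rightarrow\dot\psi_{i-1}\geq -p_i\alpha_i(\psi_{i-1})$, $p_i(t)\geq 0$, and $(\bm x(0),\bm p(0))\in C_i$, gives $(\bm x(t),\bm p(t))\in C_i$ for every $i\in\{2,\dots,m\}$, so $C_2\cap\cdots\cap C_m$ is forward invariant for (\ref{eqn:affine}), (\ref{eqn:assist}). The last step, $\psi_1(\bm x(t),\bm p(t))\geq 0\Rightarrow\dot b(\bm x(t))\geq -p_1(t)\alpha_1(b(\bm x(t)))$ with $b(\bm x(0))\geq 0$, gives $b(\bm x(t))\geq 0$, i.e. $\bm x(t)\in C_1$ for all $t\geq 0$, hence $C_1$ is forward invariant for (\ref{eqn:affine}). $\blacksquare$
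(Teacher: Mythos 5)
Your proposal is correct and follows essentially the same route as the paper's proof: establish $p_i(t)\geq 0$ for all $i$ via Theorem \ref{thm:hocbf} applied to the auxiliary dynamics, identify the AdaCBF constraint (\ref{eqn:acbf}) with $\dot\psi_{m-1}+p_m\alpha_m(\psi_{m-1})\geq 0$, and then recursively conclude $\psi_{i}\geq 0$ down to $b(\bm x(t))\geq 0$. Your uniform comparison-lemma treatment of the time-varying multiplier $p_m(t)\geq 0$ is in fact slightly cleaner than the paper's case split between $p_m(t)>0$ for all $t$ and $p_m(t)=0$, since it covers trajectories where $p_m$ alternates between zero and positive values.
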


\begin{proof}
 If $b(\bm x)$ is an AdaCBF, then we have that $p_{m}(t) \geq 0, \forall t\geq 0$. Constraint (\ref{eqn:acbf}) is equivalent to $\dot \psi_{m-1}(\bm x, \bm p) + p_m\alpha_{m}(\psi_{m-1}(\bm x, \bm p)) \geq 0$. If $p_m(t) > 0, \forall t\geq 0$, it follows from Thm. \ref{thm:hocbf} that $\psi_{m-1}(\bm x(t), \bm p(t))\geq 0, \forall t\geq 0$. If $p_m(t) = 0$, then $\dot \psi_{m-1}(\bm x, \bm p) \geq 0$. Since $(\bm x, \bm p) \in C_m$ (i.e., $\psi_{m-1}(\bm x, \bm p) \geq 0$ is initially satisfied), we have that $\psi_{m-1}(\bm x(t), \bm p(t))\geq 0, \forall t\geq 0$. Because $p_i, \forall i\in\{1,2,\dots,m-1\}$ is a HOCBF for the auxiliary dynamics (\ref{eqn:assist}), we have from Thm. \ref{thm:hocbf} that $p_i(t) \geq 0, \forall t\geq 0, \forall i\in\{1,2,\dots,m-1\}$. Then, we can recursively prove that $\psi_{i}(\bm x(t), \bm p(t))\geq 0, \forall t\geq 0, \forall i\in\{2,\dots,m-2\}$ similarly to the case $i = m-1$, and eventually prove that $\psi_0(\bm x(t)) \geq 0, \forall t\geq 0$, i.e., $b(\bm x(t)) \geq 0, \forall t\geq 0$.  Therefore, the sets $C_1$ and  $C_2\cap,\dots, \cap C_{m}$ are forward invariant.
\end{proof}

\begin{remark} \label{rem:adapt}
	 In the AdaCBF constraint (\ref{eqn:acbf}), the control input $\bm u$ of system (\ref{eqn:affine}) depends on the control input $\nu_i, \forall i\in\{1, 2,\dots,m-1\}$ of the auxiliary dynamics (\ref{eqn:assist}). The control input $\nu_i$ is only constrained by the corresponding HOCBF constraint (shown in (\ref{eqn:cbf})) since we require that $p_i$ is a HOCBF, and there are no control bounds on $\nu_i$. Therefore, we somewhat relax the constraints on the control input of system (\ref{eqn:affine}) in the AdaCBF by allowing the penalty function $p_i,\forall i\in\{1, 2,\dots,m\}$ to change. However, the forward invariance of the set $C_1$ is guaranteed, i.e., the original constraint $b(\bm x)\geq 0$ is guaranteed to be satisfied. This is the adaptivity of the AdaCBF. Note that we may not need to define a penalty function $p_i$ for every class $\mathcal{K}$ function $\alpha_i(\cdot)$ in (\ref{eqn:adafunc})-we can just define penalty functions for some of them.
\end{remark}

\textbf{Adaptivity to changing control bounds and dynamics noise:} In the HOCBF method, the problem may be infeasible in the presence of both control limitations (\ref{eqn:control}) and the HOCBF constraint (\ref{eqn:constraint}). There are two reasons for the problem to become infeasible: $(i)$ the control limitations (\ref{eqn:control}) are too tight or the control limitations are time-varying such that the HOCBF constraint (\ref{eqn:constraint}) will conlict with (\ref{eqn:control}) after it becomes active; $(ii)$ the dynamics (\ref{eqn:affine}) are not accurately modeled, and there may be uncertain variables, etc. (we consider all of them as dynamics noise). In this case, the HOCBF constraint (\ref{eqn:constraint}) might also conlict with (\ref{eqn:control}) when both of them become active. This is because the state variables also show up in the HOCBF constraint (\ref{eqn:constraint}), and thus, the dynamics noise can easily (and randomly) change the HOCBF constraint (\ref{eqn:constraint}) through the (noised) state variables such that (\ref{eqn:constraint}) may conflict with the control limitations when they are also active. However, the problem feasibility is improved in the AdaCBF method since the control $\bm u$ in the AdaCBF constraint (\ref{eqn:acbf}) is relaxed by $\nu_i,\forall i\in\{1, 2,\dots,m-1\}$, as discussed in Remark \ref{rem:adapt}. In order to make the original constraint $b(\bm x) \geq 0$ not be violated by the noise, we can define high order class $\mathcal{K}$ functions (such as high order polynomials) such that the value of the AdaCBF $b(\bm x)$ will stay away from 0 in the long run after the corresponding AdaCBF constraint (\ref{eqn:acbf}) becomes active, as shown in \cite{Xiao2019}.

\vspace{2mm}
\begin{theorem} \label{thm:nece}
	Given an AdaCBF $b(\bm x)$ from Def. \ref{def:acbf} with the associated sets $C_1, C_2,\dots, C_{m}$ defined by (\ref{eqn:adasets}), if $b(\bm x(0)) > 0$, then the satisfaction of the AdaCBF constraint (\ref{eqn:acbf}) is a necessary and sufficient condition for the satisfaction of the original constraint $b(\bm x)> 0$.
\end{theorem}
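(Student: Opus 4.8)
The plan is to establish the two directions separately, with the backward (sufficient) direction reusing Theorem~\ref{thm:acbf} almost verbatim and the forward (necessary) direction being the genuinely new content.

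For sufficiency, assume that $(\bm u(t),\bm\nu(t))\in K_{acbf}(\bm x(t),\bm p(t))$ holds for all $t\ge 0$ with a compatible initialization (i.e.\ $p_i(0)>0$ and $(\bm x(0),\bm p(0))\in C_2\cap\cdots\cap C_m$, which is automatic in the construction below). Theorem~\ref{thm:acbf} then already gives $\psi_0(\bm x(t))=b(\bm x(t))\ge 0$ for all $t$, and in fact it gives $\psi_1(\bm x(t),\bm p(t))\ge 0$ for all $t$. To upgrade $b\ge 0$ to $b>0$, note that $\psi_1\ge 0$ means $\dot\psi_0(\bm x(t))\ge -p_1(t)\,\alpha_1(\psi_0(\bm x(t)))$; since $\alpha_1$ is differentiable, hence locally Lipschitz, with $\alpha_1(0)=0$, the scalar comparison ODE $\dot y=-p_1(t)\,\alpha_1(y)$ with $y(0)=b(\bm x(0))>0$ has a unique solution that cannot reach the equilibrium $0$ and therefore stays strictly positive on every bounded interval; the comparison lemma gives $b(\bm x(t))\ge y(t)>0$ for all $t\ge 0$. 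This is just the standard barrier-function argument specialized to $\psi_0$.

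For necessity, suppose a Lipschitz controller $\bm u(t)$ produces a trajectory $\bm x(t)$ of (\ref{eqn:affine}) with $b(\bm x(t))>0$ for all $t\ge 0$, starting from $b(\bm x(0))>0$. I would build penalty functions $p_1,\dots,p_m$ and auxiliary controls $\nu_1,\dots,\nu_{m-1}$, recursively in $i$, so that (\ref{eqn:acbf}) holds along this very trajectory; this shows that whenever the original constraint is met, the AdaCBF constraint can be met, i.e.\ it is necessary. Observe first that $\psi_{i-1}$ depends on $\bm x$ and on $p_1,\dots,p_{i-1}$ together with finitely many of their derivatives, but on none of $p_i,\dots,p_m$; hence once $p_1,\dots,p_{i-1}$ have been fixed, $\dot\psi_{i-1}(\bm x(t),\bm p(t))$ is a determined continuous function of $t$ and $\psi_i=\dot\psi_{i-1}+p_i\,\alpha_i(\psi_{i-1})$ is affine in $p_i$. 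At stage $i$, having arranged $\psi_{i-1}(\bm x(t),\bm p(t))>0$ for all $t$ (the base case $i=1$ being $\psi_0=b>0$, true by hypothesis), we have $\alpha_i(\psi_{i-1}(\bm x(t),\bm p(t)))>0$, so the pointwise lower bound $\max\{0,-\dot\psi_{i-1}(t)\}/\alpha_i(\psi_{i-1}(t))$ is a finite continuous function of $t$; choosing $p_i$ to be a sufficiently regular, strictly positive function of $t$ that majorizes this bound (plus a positive constant) makes $\psi_i(\bm x(t),\bm p(t))>0$ for all $t$, with $p_i(0)>0$. Taking the auxiliary dynamics (\ref{eqn:assist}) of $p_i$ in linear (integrator-chain) form, the prescribed $p_i$ is realized by $\nu_i=p_i^{(m-i)}$, and because this input is unconstrained and enters with nonzero coefficient while $p_i$ stays strictly positive, one can also choose the inner class~$\mathcal{K}$ functions of $p_i$ large enough so that this $\nu_i$ lies in $U_{cbf}$ and $p_i$ is a HOCBF of relative degree $m-i$ for (\ref{eqn:assist}). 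Running this for $i=1,\dots,m-1$, and finally picking the nonnegative variable $p_m$ pointwise large enough (using $\psi_{m-1}>0$) so that $\dot\psi_{m-1}+p_m\alpha_m(\psi_{m-1})\ge 0$ — which is exactly (\ref{eqn:acbf}) with the given $\bm u$ — completes the construction.

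The main obstacle is the bookkeeping in the necessity part rather than any single hard estimate: one must check that at each stage $\dot\psi_{i-1}$ really depends only on already-chosen data, that the differentiability budget is consistent (each $p_j$, $j\le m-1$, must be $m-j$ times differentiable so that every Lie-derivative term appearing in (\ref{eqn:acbf}) exists), and that the $t$-dependent pointwise lower bounds on the $p_i$ can be realized by single functions that are simultaneously smooth, strictly positive, admit $p_i(0)>0$, and keep $\bm\nu\in U_{cbf}$ — uniformly over all $t\ge 0$, which requires handling possible growth of those bounds as $t\to\infty$ via a smooth monotone envelope. A minor subtlety on the sufficiency side is that $\alpha_m$ is assumed only class~$\mathcal{K}$ and need not be differentiable, so the strict-positivity comparison argument must be run at the $\psi_0$ level through $\alpha_1$ (which is differentiable by Definition~\ref{def:acbf}), using only the $\psi_1\ge 0$ output of Theorem~\ref{thm:acbf}.
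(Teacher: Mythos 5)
Your proposal is correct and follows essentially the same route as the paper: sufficiency via the forward-invariance result of Theorem~\ref{thm:acbf}, and necessity via a recursive, pointwise construction of penalty functions $p_i(t)$ exploiting $\alpha_i(\psi_{i-1})>0$ to absorb any value of $\dot\psi_{i-1}$. Your version is in fact somewhat more careful than the paper's on two points the paper glosses over --- upgrading $b\ge 0$ to $b>0$ via a comparison-lemma argument at the $\psi_0$ level (the paper merely asserts that ``proper class $\mathcal{K}$ functions'' yield strict positivity), and tracking the differentiability budget and HOCBF admissibility of the constructed $p_i$ --- but these are refinements of the same argument, not a different one.
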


\begin{proof}
	If $b(\bm x(0)) > 0$, it follows from Thm. \ref{thm:acbf} that we can always choose proper class $\mathcal{K}$ functions (such as linear functions, quadratic functions, etc.) such that $\psi_0(\bm x(t)) > 0 $ and $\psi_i(\bm x(t),\bm p(t)) > 0, i\in\{1,\dots,m-1\}, \forall t\geq 0$ (note that $\psi_{0}(\bm x)=b(\bm x)$). Thus, the satisfaction of the AdaCBF constraint (\ref{eqn:acbf}) is a sufficient condition for the satisfaction of the original constraint $b(\bm x)> 0$. 
	
	If $b(\bm x(t)) > 0$, we have that there exists a penalty function $p_1(t) \geq 0$ (since $p_1(t)$ is a HOCBF) such that $\dot b(\bm x) > -p_1(t)\alpha_1(b(\bm x))$ for any $\dot b(\bm x)$ with respect to dynamics (\ref{eqn:affine}) (because $\alpha_1(b(\bm x)) > 0$). Note from (\ref{eqn:adafunc}) that $\psi_{0}(\bm x)=b(\bm x)$, we have $\dot \psi_0(\bm x) + p_1(t)\alpha_1(\psi_0(\bm x)) > 0$ (i.e., $\psi_1(\bm x, \bm p)> 0$). The $i^{th}$ derivative of $b(\bm x)$ shows in $\psi_i, i\in\{2,3,\dots,m-1\}$, and we can also prove similarly that there exists a penalty function $p_i(t)\geq 0$ (since $p_i(t)$ is a HOCBF) such that $\psi_i(\bm x, \bm p)> 0, i\in\{2,3,\dots,m-1\}$ in a recursive way. Eventually, there exists $p_m(t)\geq 0$ such that $\dot \psi_{m-1}(\bm x, \bm p) + p_m(t)\alpha_m(\psi_{m-1}(\bm x, \bm p)) \geq 0$ (i.e., $\psi_m(\bm x, \bm p)\geq 0$). Since $\psi_m(\bm x, \bm p)\geq 0$ is equivalent to the satisfaction of the AdaCBF constraint (\ref{eqn:acbf}), we have that the satisfaction of the AdaCBF constraint (\ref{eqn:acbf}) is a necessary condition for the satisfaction of the original constraint $b(\bm x)> 0$.
\end{proof}

\begin{remark}
	Since $p_i(t), i\in\{1,2,\dots,m\}$ is required to be $(m-i)^{th}$ order differentiable, it is usually not guaranteed to find such $p_i(t)$ functions when we find the control by solving the QPs introduced at the end of Sec. \ref{sec:pre}.  Since the satisfaction of the AdaCBF constraint (\ref{eqn:acbf}) is equivalent to the satisfaction of the original constraint, the system performance is not reduced in the mapping of a constraint from state to control.
\end{remark}

$\textbf{Example revisited.}$ For the SACC problem introduced in Sec.\ref{sec:hocbf:moti}, the relative degree of the constraint from Eqn. (\ref{eqn:safety}) is 2, i.e., we need an AdaCBF with $m = 2$. We  still choose  $\alpha_1(b(\bm x(t))) = b(\bm x(t))$ and $\alpha_2(\psi_1(\bm x(t))) = \psi_1(\bm x(t))$  in the definition of an AdaCBF in Def. \ref{def:acbf}. Suppose we only consider a penalty function $p_1$ on the class $\mathcal{K}$ function $\alpha_1(\cdot)$ and define linear dynamics for $p_1$ in the form $\dot p_1 = \nu_1$. In order for $b(\bm x(t)) := x_{p}(t) - x(t) - \delta_0$ to be an AdaCBF for (\ref{eqn:simpledynamics}), a control input $u(t)$ should satisfy
\begin{small}
\begin{equation}\label{eqn:safety_exre}
\begin{aligned}
\underbrace{0}_{L_f^2b(\bm x(t))} + \underbrace{-1}_{L_gL_fb(\bm x(t))}\times u(t) + \underbrace{(x_{p}(t) - x(t) - \delta_0)}_{L_{G_1}p_1(t)\alpha_1(b(\bm x(t)))}\times \nu_1(t) \\ + \underbrace{p_1(t)(v_0 - v(t))}_{R(b(\bm x(t)), \bm p(t))} + \underbrace{v_0 - v(t) + p_1(t)(x_{p}(t) - x(t) - \delta_0)}_{\alpha_2(\psi_1(\bm x(t),\bm p(t)))} \geq 0.
\end{aligned}
\end{equation}
\end{small}

Since $u(t)$ depends on $\nu_1$ that is without control bounds, the control input $u(t)$ in the above AdaCBF constraint is relaxed. Thus, this constraint is adaptive to the change of the control bound $u_{min}$ in (\ref{eqn:scons}) and the uncertainties of $v_0$ and $x_p(t)$ from the front vehicle, etc.. Note that $p_1$ should be a HOCBF for the auxiliary dynamics $\dot p_1 = \nu_1$. The control input $\nu_1$ is constrained by the corresponding HOCBF constraint such that $p_1(t)\geq 0, \forall t\geq 0$ is satisfied.

\subsection{Optimal Control with AdaCBFs}
\label{sec:oc}

 Consider an optimal control problem for system (\ref{eqn:affine}) with the cost defined as:
\begin{equation}\label{eqn:cost}
J(\bm u(t)) = \int_{0}^{T}\mathcal{C}(||\bm u(t)||)dt
\end{equation}
where $||\cdot||$ denotes the 2-norm of a vector,  $\mathcal{C}(\cdot)$ is a strictly increasing function of its argument, and $T > 0$. 

System (\ref{eqn:affine}) is not accurately modeled, as well as is with uncertain variables (both are unknown, and we can assign uncertain variables with the values of measured expection). In addition, system (\ref{eqn:affine}) has time-varying control bound $U(t)$ defined as:
\begin{equation} \label{eqn:controlvar}
U(t):= \{\bm u\in\mathbb{R}^q: \bm u_{min}(t)\leq\bm u \leq \bm u_{max}(t)\},
\end{equation}
with $\bm u_{min}(t), \bm u_{max}(t) \in \mathbb{R}^q$.

Assume a (safety) constraint $b(\bm x) \geq 0$ with relative degree $m$ has to be satisfied by system (\ref{eqn:affine}). In order to improve the problem feasibility, we use the AdaCBF method. Then $\bm u$ should satisfy the AdaCBF constraint (\ref{eqn:acbf}). Moreover, each $\nu_i, i\in\{1,2,\dots, m-1\}$ is constrained by the HOCBF constraint (\ref{eqn:constraint}) corresponding to the constraint $p_i(t)\geq 0$ for the auxiliary dynamics (\ref{eqn:assist}).

Note that the control $\bm \nu$ from the auxiliary dynamics are without control bounds, and are only constrained by the HOCBF constraints defined in (\ref{eqn:cbf}). However, the HOCBF constraint for each $\nu_i, i\in\{1,2,\dots,m-1\}$ is only constrained from one side, i.e.,
\begin{equation}
  \nu_i  \geq \frac{-L_{F_i}^{m-i}p_i -  S(\bm p_i) - \alpha_{m-i}(\psi_{i,m-i-1}(\bm p_i))}{L_{G_i}L_{F_i}^{m-i-1}p_i},
\end{equation}
if $L_{G_i}L_{F_i}^{m-i-1}p_i > 0$. Therefore, $\nu_i$ is unbounded in the (positive) infinity side. The adaptivity of an AdaCBF depends on the auxiliary dynamics (\ref{eqn:assist}) (i.e., $\bm u$ depends on $\nu_i, \forall i\in\{1,2,\dots,m-1\}$). If $\nu_i$ changes too fast, it can affect the smoothness of the control $\bm u$ obtained through solving the QPs, and thus, may damage the performance of system (\ref{eqn:affine}).

If we add control bounds on $\nu_i$, the problem feasibility may be decreased (i.e., the adaptivity of an AdaCBF is weakened). If we try to minimize each $\nu_i^2$ in the cost, $p_i$ may stay at a large value, which contradicts with the penalty method from \cite{Xiao2019} (i.e., we wish to have small enough value of $p_i$ to improve the problem feasibility). Therefore, we wish to stabilize the values of $p_i(t), \forall i\in\{1,2,\dots, m\}$ and always try to decrease $p_i$ when it is large. We usually stabilize each $p_i(t)$ to a small enough value $p_i^* > 0$ (recommended by the penalty method from \cite{Xiao2019} or by the optimal penalties learned in \cite{Xiao2019LCSS}). We choose smaller $p_i^*$ if $\alpha_i(\cdot)$ is a high order function (such as polynomial function) than a low order one. Suppose the auxiliary dynamics (\ref{eqn:assist}) are input-output linearized (otherwise, we can do input-output linearization as we have assumed that (\ref{eqn:assist}) is input-output linearizable), we can use either the tracking control from \cite{Khalil2002} or the CLF method to stabilize $p_i(t)$, i.e., if $m = 1$, we wish to minimize $(p_1 - p_1^*)^2$ (take $p_1$ as a decision variable), if $m = 2$, we define a CLF $V_1(p_1) := (p_1 - p_1^*)^2$ as in Def. \ref{def:clf}, and if $m > 2$, we find a desired state feedback $\hat p_{i,m-i}$ for $p_{i,m-i}$ in the form:
\begin{equation}\label{eqn:pdesired}
\hat p_{i,m-i}= 
\left\{
\begin{array}{lcl}
-k_1(p_{i} - p_i^*), \qquad\qquad\qquad\;\;\; i = m-2,\\
\begin{aligned}
\!-k_1\!(p_i \!-\! p_i^*) \!-\! k_2p_{i,2} \!-\!\dots, \!-\!k_{m\!-\!i\!-\!1}p_{i,m\!-\!i\!-\!1},& \\ i < m-2,&\end{aligned}
\end{array}
\right.
\end{equation}
where $k_1 > 0, k_2 > 0,\dots, k_{m-i-1} > 0$. In the last equation, if $i = m-1$, we can directly define a CLF $V_i(\bm p_i):= (p_i - p_i^*)^2$ as in Def. \ref{def:clf}.

Then we can define a CLF $V_i(\bm p_i):= (p_{i,m-i} - \hat p_{i,m-i})^2, i\in\{1,2,\dots, m-1\}$ (the relative degree of $V_i(\bm p_i)$ is one) to stabilize each $p_i$ with $c_2 = \epsilon > 0$ in Def. \ref{def:clf}, any control input $\nu_i$ should satisfy:
\begin{equation}\label{eqn:oc_clf}
 L_{F_i}V_i(\bm p_i)+L_{G_i}V_i(\bm p_i) \nu_i + \epsilon V_i(\bm p_i) \leq \delta_i
\end{equation}
where $\delta_i$ denotes a relaxation variable that we want to minimize.

In all cases, we may also want to stabilize $p_{m}$ that is not differentiated, we can just minimize $(p_m - p_m^*)^2$ (take $p_m$ as a decision variable). As discussed in the above, we wish to decrease $p_i$ when it is large, so we always want to minimize $\nu_i$. Therefore, we can reformulate the cost (\ref{eqn:cost}) by the AdaCBF in the form (let $\bm \delta := (\delta_1, \delta_2,\dots, \delta_{m-1})$):
\begin{equation}\label{eqn:cost_reform}
\begin{aligned}
J(\bm u(t), \bm \nu(t),\bm\delta(t), p_m(t)) = \int_{0}^{T}\mathcal{C}(||\bm u(t)||) + \sum_{i=1}^{m-1}W_i\nu_i(t) \\+ \sum_{i = 1}^{m-1}P_i\delta_i^2(t) + Q(p_m(t) - p_m^*)^2dt
\end{aligned}
\end{equation}
subject to (\ref{eqn:affine}), (\ref{eqn:controlvar}), (\ref{eqn:acbf}), (\ref{eqn:assist}), the HOCBF constraint in (\ref{eqn:cbf}) for each $p_i\geq 0, i\in\{1,2,\dots, m-1\}$, $p_m(t) \geq 0$, and the CLF constraint (\ref{eqn:oc_clf}). $W_i > 0, P_i > 0, i\in\{1,2,\dots, m-1\}$, and $Q \geq 0$.

Then we can use the QP based approach introduced at the end of Sec. \ref{sec:pre} to solve (\ref{eqn:cost_reform}).

$\textbf{Complexity:}$
The time complexity of QP (active-set method) is polynomial in the dimension of decision variables on average. In general, the complexity is $O(n^3)$, where $n$ denotes the dimension of the decision variable space. In the HOCBF based QP, the complexity is $O(q^3)$ (recall that $q$ is the dimension of the control $\bm u$). However, in (\ref{eqn:cost_reform}), the complexity becomes $O((q+2m-1)^3)$. We increase the adaptivity of the CBF method at the cost of more computation time, but the AdaCBF based QP is still fast enough in most problems, as we will see in the simulations.

\section{ACC PROBLEM FORMULATION}
\label{sec:ACC}

In this section, we consider a more realistic version of the adaptive cruise control (ACC) problem introduced in 
Sec.\ref{sec:hocbf:moti}, which was referred to as the simplified adaptive cruise control (SACC) problem. we consider that the safety constraint is critical and study the adaptivity of AdaCBF discussed in the last section.

 \textbf{Vehicle Dynamics:} Instead of using the simple dynamics in (\ref{eqn:simpledynamics}), we consider more accurate vehicle dynamics in the form:
\begin{equation}\label{eqn:vehicle}
\underbrace{\left[\begin{array}{c} 
	\dot x(t)\\
	\dot v(t)
	\end{array} \right]}_{ \dot {\bm x}(t)}=
\underbrace{\left[\begin{array}{c}  
	v(t)\\
	-\frac{1}{M}F_{r}(v(t))
	\end{array} \right]}_{f(\bm x(t))} + 
\underbrace{\left[\begin{array}{c}  
	0\\
	\frac{1}{M}
	\end{array} \right]}_{g(\bm x(t))}u(t)
\end{equation}
where $M$ denotes the mass of the controlled vehicle. $F_{r}(v(t))$ denotes the resistance force, which is expressed \cite{Khalil2002} as:
\begin{equation}\label{eqn:resistence}
F_{r}(v(t)) = f_0sgn(v(t)) + f_1v(t) + f_2 v^2(t),
\end{equation}
where $f_0 > 0, f_1 > 0$ and $f_2 > 0$ are scalars determined empirically. The first term in $F_{r}(v(t))$ denotes the coulomb friction force, the second term denotes the viscous friction force and the last term denotes the aerodynamic drag.

\textbf{Constraint 1} (Vehicle limitations): There are constraints
on the speed and acceleration, i.e.,
\begin{equation}\label{eqn:limitation}%
\begin{aligned} v_{min} \leq v(t)\leq v_{max}, \forall t\in[0,t_f],\\ -c_d(t)Mg\leq u(t)\leq c_a(t)Mg, \forall t\in[0,t_f], \end{aligned} 
\end{equation}
where $v_{max}>0$ and $v_{min}\geq 0$ denote the maximum and minimum allowed speeds, while $c_d(t)>0$ and $c_a(t) > 0$ are deceleration and
acceleration coefficients, respectively, and $g$ is the gravity constant.

\vspace{1mm}
\textbf{Constraint 2} (Safety): Eqn. (\ref{eqn:safety}).

\vspace{1mm}
\textbf{Objective 1} (Desired Speed): The controlled vehicle always attempts to achieve a desired speed $v_d > 0$.

\vspace{1mm}
\textbf{Objective2} (Minimum Energy Consumption): We also want to minimize the energy consumption:
\begin{equation}\label{eqn:energy}
J(u(t))=\int_{0}^{T}\left(\frac{u(t) - F_{r}(v(t))}{M}\right)^2dt,
\end{equation}

\begin{problem}\label{problem1}
	Determine control laws to achieve Objectives 1, 2 subject to Constraints 1, 2, for the controlled vehicle governed by dynamics (\ref{eqn:vehicle}).
\end{problem}
\vspace{2ex}

We use an AdaCBF to implement Constraint 2 (\ref{eqn:safety}), and use HOCBFs to impose constraint 1 (\ref{eqn:limitation}) on control input and a control Lyapunov function in Def. \ref{def:clf} to achieve Objective 1. We capture Objective 2 in the cost of the optimization problem.

\section{ACC PROBLEM REFORMULATION}
\label{sec:reform}

For {\it\textbf{Problem} 1}, we use the quadratic program (QP) - based method introduced in \cite{Aaron2014}. The relative degree of (\ref{eqn:safety}) is 2, so we defined an AdaCBF with $m = 2$.  We consider a quadratic class $\mathcal{K}$ function for $\alpha_1(\cdot)$ and a linear class $\mathcal{K}$ function for $\alpha_2(\cdot)$ in the definition of the AdaCBF.

\subsection{Desired Speed (Objective 1)} We use a  control Lyapunov function to stabilize $v(t)$ to $v_d$ and relax the corresponding constraint (\ref{eqn:clf}) to make it a soft constraint \cite{Aaron2012}. Consider a Lyapunov function $V_{acc}(\bm x(t)):= (v(t) - v_d)^2$, with $c_1 = c_2 = 1$ and $c_3 = \epsilon > 0$ in Def. \ref{def:clf}. 
Any control input $u(t)$ should satisfy
\begin{equation}\label{eqn:clfconstraint}%
\begin{aligned}
\underbrace{-\frac{2(v(t) - v_d)}{M}F_r(v(t))}_{L_fV_{acc}(\bm x(t))} + \underbrace{\epsilon(v(t) - v_d)^2}_{\epsilon V_{acc}(\bm x(t))} \\+ \underbrace{\frac{2(v(t) - v_d)}{M}}_{L_gV_{acc}(\bm x(t))}u(t) \leq \delta_{acc}(t)
\end{aligned}
\end{equation}
$\forall t\in[0,t_f]$. Here $\delta_{acc}(t)$ denotes a relaxation variable that makes (\ref{eqn:clfconstraint}) a soft constraint.

\subsection{Vehicle Limitations (Constraint 1)} The relative degrees of speed limitations are 1, we use HOCBFs with $m = 1$ to map the limitations from speed $v(t)$ to control input $u(t)$. Let $b_{1}(\bm x(t)) := v_{max} - v(t)$, $b_{2}(\bm x(t)) := v(t) - v_{min}$ and choose $\alpha_1(b_{i}) = b_{i}, i\in\{1,2\}$ in Def. \ref{def:hocbf} for both HOCBFs. Then any control input $u(t)$ should satisfy
\begin{small}
	\begin{equation}\label{eqn:MaxSpeedConstraint}%
	\begin{aligned}
	\underbrace{\frac{F_r(v(t))}{M}}_{L_fb_{1}(\bm x(t))}\! +\! \underbrace{\frac{-1}{M}}_{L_gb_{1}(\bm x(t))}u(t)\! + \underbrace{v_{max} - v(t)}_{b_{1}(\bm x(t))} \geq\! 0,
	\end{aligned}
	\end{equation}
\end{small}
\begin{small}
	\begin{equation}\label{eqn:MinSpeedConstraint}%
	\begin{aligned}
	\underbrace{\frac{-F_r(v(t))}{M}}_{L_fb_{2}(\bm x(t))}\! +\! \underbrace{\frac{1}{M}}_{L_gb_{2}(\bm x(t))}u(t)\! + \underbrace{v(t) - v_{min}}_{b_{2}(\bm x(t))} \geq\! 0.
	\end{aligned}
	\end{equation}
\end{small}

Since the control limitations are already constraints on control input, we do not need HOCBFs for them.

\subsection{Safety Constraint (Constraint 2)} Since the HOCBF constraint for (\ref{eqn:safety}) can easily conflict with the control limitations in (\ref{eqn:limitation}), we use an AdaCBF with $m = 2$ (the relative degree of the safety constraint (\ref{eqn:safety}) is two). Let $b(\bm x(t)) := x_{p}(t) - x(t) - \delta_0$, we define $\psi_1, \psi_2$ in the form:

\begin{equation} \label{eqn:linear}
\begin{aligned}
\psi_1(\bm x(t), \bm p(t)) :=& \dot b(\bm x(t)) + p_1(t)b^2(\bm x(t))\\
\psi_2(\bm x(t), \bm p(t)) := &\dot \psi_1(\bm x(t), \bm p(t)) + p_2(t)\psi_1(\bm x(t), \bm p(t))
\end{aligned}
\end{equation}

We define an auxiliary dynamics $\dot p_1 = F_1(p_1) + G_1(p_1)\nu_1$ for $p_1(t)$ (we will not take derivatives on $p_2(t)$, so we just set $p_2(t)\geq 0,\forall t\geq 0$ as a decision variable to be determined (also time-varying)) in the form 
\begin{equation}\label{eqn:lin}
\dot p_1(t) = \nu_1(t),
\end{equation}

Combining the dynamics (\ref{eqn:vehicle}) with (\ref{eqn:linear}), any control input  $u(t)$ should satisfy (the AdaCBF constraint):
\begin{equation} \label{eqn:acbf_acc}
\begin{aligned}
\underbrace{\frac{F_r(v(t))}{M}}_{L_f^2b(\bm x(t))}\! +\! \underbrace{\frac{-1}{M}}_{L_gL_fb(\bm x(t))}u(t)\! + \underbrace{b^2(\bm x(t))}_{L_{G_1}p_1(t)\alpha_2(\psi_1)}\nu_1(t) \\+ 2p_1(t)b(\bm x(t))\dot b(\bm x(t)) +  p_2(t)\psi_1(\bm x(t), \bm p(t))\geq\! 0.
\end{aligned}
\end{equation}

Since $p_1(t)$ has to be a HOCBF ($p_1(t)\geq 0$), and its relative degree is 1 for (\ref{eqn:lin}), any control input $\nu_1(t)$ should satisfy (define $\alpha_1$ as in a linear function in Def. \ref{def:hocbf}):
\begin{equation} \label{eqn:cbf_acc}
\begin{aligned}
\underbrace{0}_{L_{F_1}p_1(t)}\! +\! \underbrace{1}_{L_{G_1}p_1(t)}\nu_1(t)\! + p_1(t)\geq\! 0,
\end{aligned}
\end{equation}
with $p_1(0) > 0$.

We wish to stabilize $p_1(t)$ to a desired $p_1^* > 0$ (usually a small number), and define a CLF $V_1(p_1(t)):=(p_1(t) - p_1^*)^2$, with $c_1 = c_2 = 1$ and $c_3 = \epsilon > 0$ in Def. \ref{def:clf}. Any control input should satisfy:
\begin{equation} \label{eqn:clf_acc}
\begin{aligned}
\underbrace{0}_{L_{F_1}V_1(p_1(t))}\! +\! \underbrace{2(p_1(t) - p_1^*)}_{L_{G_1}V_1(p_1(t))}\nu_1(t) + \epsilon V_1(p_1(t)) \leq \delta_1(t).
\end{aligned}
\end{equation}
Note that we may also wish to minimize $(p_2(t) - p_2^*)^2$, where $p_2^* > 0$.

\subsection{Reformulated ACC Problem} 

We partition the time interval $[0,T]$ 
into a set of equal time intervals $\{[0, \Delta t), [\Delta t,2\Delta t),\dots\}$, where $\Delta t > 0$. In each interval $[\omega \Delta t, (\omega+1) \Delta t)$ ($\omega = 0,1,2,\dots$), we assume the control is constant (i.e., the overall control will be piece-wise constant), and reformulate (approximately) {\it\textbf{Problem} 1} as a sequence of QPs.  Specifically, 
at $t = \omega \Delta t$ ($\omega = 0,1,2,\dots$), we solve 
\begin{equation} \label{eqn:objACC}
\bm u^*(t) = \arg\min_{\bm u(t)} \frac{1}{2}\bm u(t)^TH\bm u(t) + F^T\bm u(t)
\end{equation}
\[\begin{small}
\bm u(t)\! =\! \left[\begin{array}{c}  
\! u(t)\!\\
\!\delta_{acc}(t)\!\\
\nu_1(t)\\
\delta_1(t)\\
p_2(t)
\end{array} \right]\!,
 F\! =\!  \left[\begin{array}{c} 
\!\frac{-2F_r(v(t))}{M^2}\!\\
0\\
W_1\\
0\\
-2Qp_2^*
\end{array} \right], \end{small}
\]
\[
H\! =\! \left[\begin{array}{ccccc} 
\frac{2}{M^2} & 0& 0& 0& 0\\
0 & 2p_{acc} & 0& 0& 0\\
0 & 0 & 0& 0& 0\\
0 & 0 & 0& 2P_1& 0\\
0 & 0 & 0& 0& 2Q\\
\end{array} \right]\!,
\]
where $p_{acc} > 0, W_1 > 0, P_1 > 0, Q \geq 0$ (We also assume $F$ is a constant vector in each interval), subject to 

\[
A_{\text{speed\_clf}} \bm u(t) \leq b_{\text{speed\_clf}},
\]
\[
A_{\text{limit}} \bm u(t) \leq b_{\text{limit}},
\]
\[
A_{\text{acbf}} \bm u(t) \leq b_{\text{acbf}}, 
\]
where the constraint parameters are
$$
\begin{aligned}
A_{\text{speed\_clf}} &= [L_gV_{acc}(\bm x(t)), -1, 0, 0, 0],\\
b_{\text{speed\_clf}} &= -L_fV_{acc}(\bm x(t)) - \epsilon V_{acc}(\bm x(t)).
\end{aligned}
$$
$$
\begin{aligned}
A_{\text{limit}} &= \left[\begin{array}{ccccc} 
-L_gb_{1}(\bm x(t)), & 0& 0& 0& 0\\
-L_gb_{2}(\bm x(t)), & 0& 0& 0& 0\\
1, & 0& 0& 0& 0\\
-1, & 0& 0& 0& 0
\end{array} \right],\\
b_{\text{limit}} &= \left[\begin{array}{c}  
L_fb_{1}(\bm x(t)) + b_{1}(\bm x(t))\\
L_fb_{2}(\bm x(t)) + b_{2}(\bm x(t))\\
c_a(t)Mg\\
c_d(t)Mg
\end{array} \right].
\end{aligned}
$$
\begin{small}
$$
A_{\text{acbf}} \!=\! \left[\begin{array}{ccccc} 
\!-\!L_gL_fb(\bm x(t))\! & 0 & \!-\!L_{G_1}p_1(t)\alpha_2(\psi_1)\! & 0 & \!-\!\psi_1\!\\
\!0\! & 0& -L_{G_1}p_1(t)& 0& 0\\
\!0\! & 0& \!L_{G_1}V_1(p_1(t))\!& \!-\!1& 0
\end{array} \right],
$$
\end{small}
$$
b_{\text{acbf}} = \left[\begin{array}{c}  
L_f^2b(\bm x(t)) + 2p_{1}(t)b(\bm x(t))\dot b(\bm x(t))\\
p_1(t)\\
-\epsilon V_1(p_1(t))
\end{array} \right].
$$

 After solving (\ref{eqn:objACC}), we update (\ref{eqn:vehicle}) with $u^*(t)$, update (\ref{eqn:lin}) with $\nu_1^*(t)$, and update $p_2(t)$ with $p_2^*(t)$, $\forall t\in (\omega \Delta t, (\omega+1) \Delta t)$. 
 
\begin{remark} 
	The control bound $-c_d(t)Mg$ is usually not a constant (the same for $c_a(t)Mg$, but it does not make sense to change $c_a(t)Mg$ since the AdaCBF constraint (\ref{eqn:acbf_acc}) can only conflict with $-c_d(t)Mg$). We can also add noise to (\ref{eqn:vehicle}) as the AdaCBF constraint (\ref{eqn:acbf_acc}) is relaxed by $\nu_1(t)$, and thus is adaptive to the change of the control bound and dynamics noise.
\end{remark}

\section{IMPLEMENTATION AND RESULTS}
\label{sec:case}
In this section, we present case studies for {\it\textbf{Problem} 1} to illustrate the adaptive property of the AdaCBF described in Sec.\ref{sec:hocbf}. 

 All simulations were conducted in MATLAB. We used quadprog to solve the QPs and ode45 to integrate the dynamics. The parameters are listed in Table \ref{table:param}. If we apply HOCBF implement the safety constraint (\ref{eqn:safety}) with $p_1(t) = 0.1, p_2(t) = 1$, the QP will be infeasible after the corresponding HOCBF constraint becomes active. Therefore, we need the AdaCBF to implement this safety constraint, as shown next.
 \begin{table}
 	\caption{Simulation parameters for problem \ref{problem1}}\label{table:param}
 	\begin{center}
 		\begin{tabular}{|c||c||c|c||c||c|}
 			\hline
 			Parameter & Value & Units &Parameter & Value & Units\\
 			\hline
 			\hline
 			$v(0)$ & 20& $m/s$&	$x_p(0)\!-\!x(0)$ & 100& $m$\\
 			\hline
 			$v_{0}$ & 13.89& $m/s$ & $v_d$ & 24& $m/s$\\
 			\hline
 			$M$ & 1650& $kg$ &g & 9.81& $m/s^2$\\
 			\hline
 			$f_0$ & 0.1& $N$ &$f_1$ & 5& $Ns/m$\\
 			\hline
 			$f_2$ & 0.25& $Ns^2/m$ &$\delta_0$ & 10& $m$\\
 			\hline
 			$v_{max}$ & 30& $m/s$&	$v_{min}$ & 0& $m/s$\\
 			\hline
 			$\Delta t$ & 0.1& $s$&	$\epsilon$ & 10& unitless\\
 			\hline
 			$c_a(t)$ & TBD& unitless&$c_d(t)$ & 0.4& unitless\\
 			\hline
 			$p_{acc}$ & 1& unitless & $W_1$&2& unitless\\ 
 			\hline
 			$P_1$ & $e^{12}$& unitless & $Q$&$e^{12}$& unitless\\
 			\hline
 		\end{tabular}
 	\end{center}
 	
 \end{table}

\textbf{Adaptivity to the changing control bound $-c_d(t)Mg$:} We first studies what happens when we change the lower control bound $-c_d(t)Mg$. In each simulation trajectory, we set the lower control bound coefficient $c_d(t)$ to a different constant or to be time-varying (such as linearly decrease $c_d(t)$). In this case, we set $T = 30s, p_1(0) = p_1^* = 0.1, p_2^* = 1$. We first present a case study of linearly decreasing $c_d(t)$ (for example, due to {\it tires slipping}), as shown in Fig. \ref{fig:control_cmp_1}. When we decrease $c_d(t)$ (weaken the braking capability of the vehicle) after the HOCBF constraint becomes active, the QPs can easily become infeasible in the HOCBF method, as the red line shown in Fig. \ref{fig:control_cmp_1}. In the AdaCBF method, the QPs are always feasible as shown in Fig. \ref{fig:control_cmp_1}, which shows the adaptivity of the AdaCBF to the time-varying control bound (wheels slipping). The computational time at each time step for both the HOCBF and AdaCBF methods are less than $0.01s$ in MATLAB ({\it Intel(R) Core(TM) i7-8700 CPU
@ 3.2GHz$\times 2$}). Note that there is an over-shot for the control when $b(\bm x)$ is small, we can put more weight on control $u(t)$ or decrease the weights $P_1, Q$ to alleviate this overshot after the control constraint is not active, as the light blue curve shown in  Fig. \ref{fig:control_cmp}. The simulation trajectories for different (constant) $c_d(t)$ values (for example, on {\it different road surfaces}) are shown in Figs. \ref{fig:control_cmp} and \ref{fig:adapt}.
\begin{figure}[thpb]
	\centering
	\includegraphics[scale=0.65]{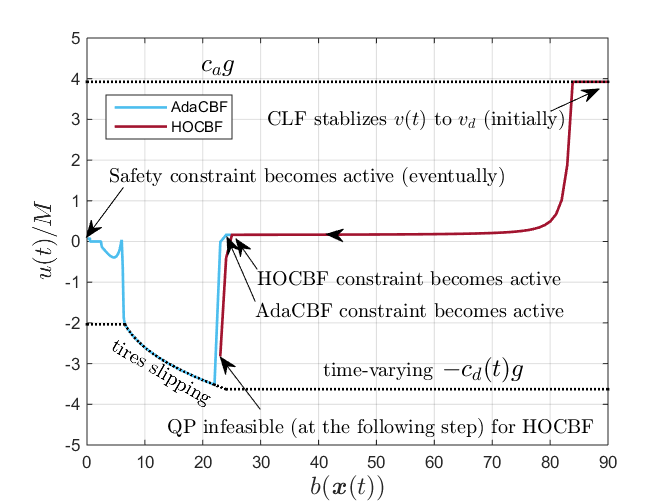}
	\caption{Control input $u(t)$ variation as $b(\bm x(t))\rightarrow 0$ for HOCBF and AdaCBF when linearly decrease $c_d(t)$ (starting from 0.37 to 0.2) after the AdaCBF (or HOCBF) constraint (\ref{eqn:acbf_acc}) becomes active. The arrow denotes the changing trend for $b(\bm x(t))$ (the AdaCBF that captures the safety constraint (\ref{eqn:safety}) (or the distance between vehicles)) with respect to time. }	
	\label{fig:control_cmp_1}
\end{figure}

\begin{figure}[thpb]
	\centering
	\includegraphics[scale=0.65]{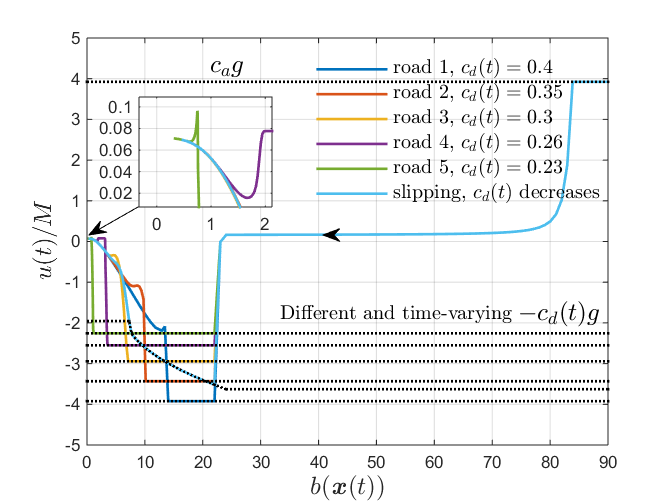}
	\caption{Control input $u(t)$ variations as $b(\bm x(t))\rightarrow 0$ under different and time-varying control lower bounds. The arrow denotes the changing trend for $b(\bm x(t))$ with respect to time. $b(\bm x)\geq 0$ implies the {\it forward invariance} of $C_1:=\{b(\bm x)\geq 0\}$}	
	\label{fig:control_cmp}
\end{figure}

\begin{figure}[thpb]
	\centering
	\includegraphics[scale=0.65]{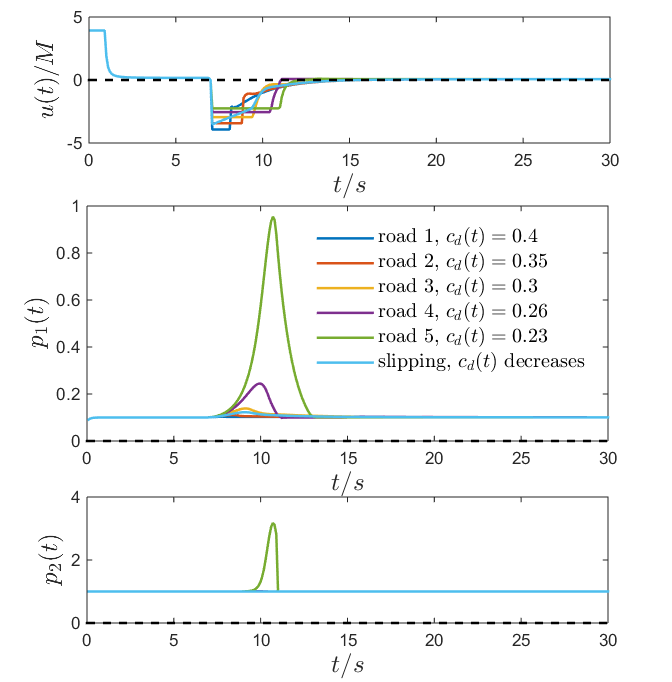}
	\caption{The penalty functions $p_1(t), p_2(t)$ and control input $u(t)$ profiles under different and time-varying control lower bounds. The value change of the penalty functions $p_1(t), p_2(t)$ demonstrates the adaptivity of the AdaCBF to the change of the control bound (or tight control bound).}
	\label{fig:adapt}
\end{figure}

As shown in Fig. \ref{fig:control_cmp}, when we set $c_d(t) = 0.4$, the QP itself has good feasibility. This induces only a little change on the penalty varible $p_1(t)$ and no change on $p_2(t)$, as shown in Fig. \ref{fig:adapt}. As we decrease $c_d(t)$ to a smaller one in another simulation (i.e., limit the braking capability of the vehicle), the variation of the penalty varibale $p_1(t)$ becomes large after the AdaCBF constraint (\ref{eqn:acbf_acc}) becomes active. As we decrease $c_d(t)$ to $0.23$, the vehicle needs to brake with $-c_d(t)Mg$ almost all the way to the safety constraint (\ref{eqn:safety}) becoming active, as green curves shown in Figs. \ref{fig:control_cmp} and \ref{fig:adapt}. This $c_d(t)$ value is close to the vehicle limit (i.e., only brake with the maximum braking force) such that the safety constraint can be satisifed. On the other hand, the penalty functions $p_1(t), p_2(t)$ both change to a big value, as shown in Fig. \ref{fig:adapt}. If we further decrease $c_d(t)$, the safety constraint (\ref{eqn:safety}) will be violated.  The value change of the penalty functions $p_1(t), p_2(t)$ demonstrates the adaptivity of the AdaCBF to the change of the control bound. The penalty method \cite{Xiao2019} shows that we wish to have smaller panelties to improve the QP feasibility before the HOCBF constraint becomes active, but the AdaCBF shows that we may actually want to increase the value of the penalties after the AdaCBF constraint becomes active, as the last frame shown in Fig. \ref{fig:adapt}. This is also demonstrated on another example shown next.

Suppose we decrese $p_1^*$ from $0.1$ to $0.02$, and we want to compare the minimum $c_d(t)$ we can take beween them, as well as study what happens when we disable the penalty function $p_2(t)$ in the AdaCBF, i.e., fix $p_2(t)$ to a constant value. Simulation results are shown in Figs. \ref{fig:control_cmp2} and \ref{fig:adapt2}.
\begin{figure}[thpb]
	\centering
	\includegraphics[scale=0.65]{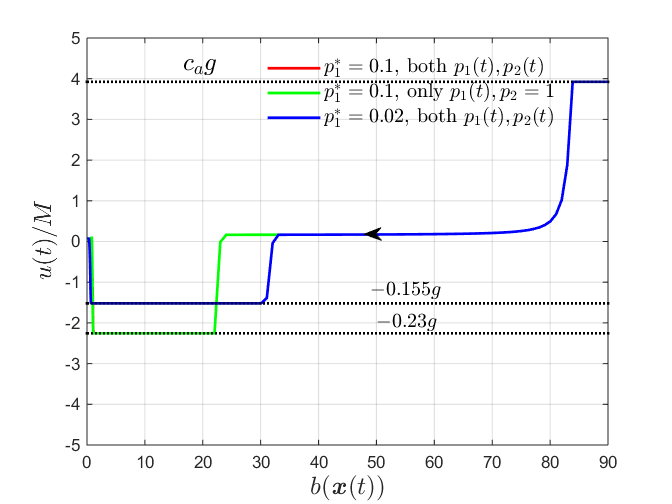}
	\caption{Control input $u(t)$ as $b(\bm x(t))\rightarrow 0$ under different $p_1^*$ values and limited penalty functions. The arrow denotes the changing trend for $b(\bm x(t))$ with respect to time.}	
	\label{fig:control_cmp2}
\end{figure}

\begin{figure}[thpb]
	\centering
	\includegraphics[scale=0.65]{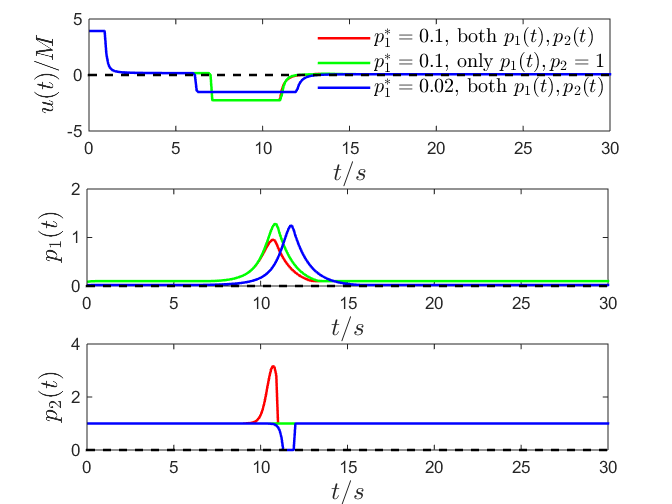}
	\caption{The penalty functions $p_1(t), p_2(t)$ and control input $u(t)$ profiles under different $p_1^*$ values and limited penalty functions. The value change of the penalty functions $p_1(t), p_2(t)$ demonstrates the adaptivity of the AdaCBF to the tight control bound.}	
	\label{fig:adapt2}
\end{figure}

We can see that when we set $p_2(t)$ to a constant instead of a penalty value, the control input profiles are almost the same, but require bigger $p_1(t)$ values after the AdaCBF constraint (\ref{eqn:acbf_acc}) becomes active. When we decrease $p_1^*$ from $0.1$ to $0.02$, we can further decrease $c_d(t)$ to 0.155, as shown in Figs. \ref{fig:control_cmp2} and \ref{fig:adapt2}. This is consistent with the penalty method \cite{Xiao2019} that we wish to take smaller penalties before the CBF constraint becomes active.

\textbf{Adaptivity to dynamics noise:} Suppose we add two noise terms $w_1(t), w_2(t)$ to the speed and acceleration in dynamics (\ref{eqn:vehicle}), respective, where $w_{1}(t),w_{2}(t)$ denote two random processes defined in an
appropriate probability space. In the simulation, $w_1(t),w_2(t)$ randomly take values in $[-2m/s, 2m/s]$ and $[-0.45m/s^2, 0.45m/s^2]$ with equal probability at time $t$, respectively. We fix the value of $c_d(t)$ to 0.23 in (\ref{eqn:limitation}) and set $T = 30s, p_1(0) = p_1^* = 0.1, p_2^* = 1$. The simulation results under different noise levels are shown in Figs. \ref{fig:adapt3} and \ref{fig:sets}, the noise is based on $[-2m/s, 2m/s]$ and $[-0.45m/s^2, 0.45m/s^2]$ for $w_1(t),w_2(t)$, respectively.

\begin{figure}[thpb]
	\centering
	\includegraphics[scale=0.65]{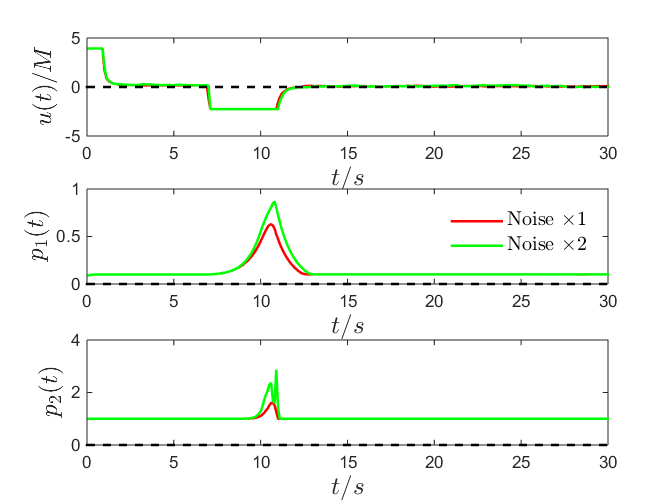}
	\caption{The penalty functions $p_1(t), p_2(t)$ and control input $u(t)$ profiles under different noise levels. The value change of the penalty functions $p_1(t), p_2(t)$ demonstrates the adaptivity of the AdaCBF to the control bound and noise.}	
	\label{fig:adapt3}
\end{figure}

\begin{figure}[thpb]
	\centering
	\includegraphics[scale=0.65]{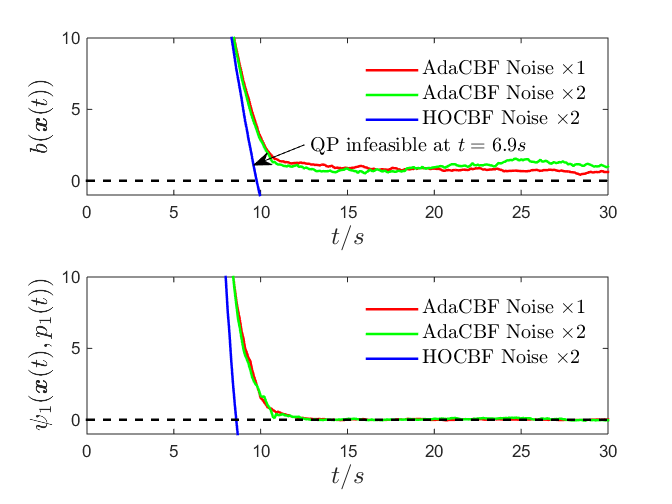}
	\caption{The profiles of $b(\bm x), \psi_1(\bm x, p_1)$ under different noise levels for AdaCBF and HOCBF, $b(\bm x)\geq 0, \psi_1(\bm x, p_1)\geq 0$ imply the {\it forward invariance} of $C_1$ and $C_2$.}	
	\label{fig:sets}
\end{figure}

 When the control constraint is active on $-c_d(t)Mg$, it can easily conflict with the HOCBF constraint (if we apply the HOCBF method) that is subjected to noise (which may make the safety constraint (\ref{eqn:safety}) violated if we apply the last moment control afterwards, as the blue line shown in Fig. \ref{fig:sets}), but the AdaCBF constraint is relaxed by the penalty functions $p_1(t)$ (through $\nu_1(t)$) and $p_2(t)$, and thus is adaptive to different dynamics noise levels and can make the QPs feasible, as shown in Fig. \ref{fig:adapt3}. The forward invariance of $C_1:=\{\bm x: b(\bm x)\geq 0\}$ and $C_2=\{(\bm x, p_1):\psi_1(\bm x, p_1)\geq 0\}$ is shown in Fig. \ref{fig:sets}. Note that $\psi_1(\bm x, p_1)$ might be temporarily negative due to noise during simulation, but will be positive again soon. This is due to the definition of $\psi_2 := \dot\psi_1 + p_2(t)\psi_1$ in (\ref{eqn:linear}). When we have $\psi_1 < 0$, the AdaCBF constraint ensures $\dot\psi_1 + p_2(t)\psi_1\geq 0$, and thus, $\dot\psi_1 \geq -p_2(t)\psi_1 > 0$ (since $p_2(t) > 0$). Therefore, $\psi_1$ will be increasing and eventually becomes positive. In this paper, we consider high order polynomial class $\mathcal{K}$ functions to make $\psi_i$ stay away from zero \cite{Xiao2019} (for example, we define $\alpha_1$ as a quadratic function in (\ref{eqn:linear})) such that $b(\bm x)\geq 0$ is guaranteed in the presence of noise. The forward invariance gurantee can also be achieved by considering the noise bounds in the AdaCBF constraint. We will compare these two approaches in future work. Note that we can also define $\alpha_2(\cdot)$ as a quadratic function in the definition of the AdaCBF in (\ref{eqn:linear}) to make $\psi_1(\bm x, p_1)$ also stay away from 0 in Fig. \ref{fig:sets}, and define $\alpha_1(\cdot)$ as a higher order polynomial function to make the AdaCBF $b(\bm x)$ stay further away to 0, and thus it can be adaptive (in the sense of both QP feasibility and forward invariance) to higher noise levels.

\section{CONCLUSION \& FUTURE WORK}
\label{sec:conclusion}

We introduce adaptive control barrier functions that
can accommodate time-varying control bounds and dynamics
noise, and also address the feasibility problem in this paper. The proposed adaptive control barrier function can also alleviate the conservativeness of the control barrier function method, and thus improve the system performance. We demonstrate the advantages of the proposed adaptive control barrier function by applying it to an adaptive cruise control problem. In the future, we will apply the adaptive control barrier function method to complex problems and systems.






\bibliographystyle{plain}
\bibliography{AdaCBF}

\begin{thebibliography}{10}

\bibitem{Aaron2012}
Aaron~D. Ames, K.~Galloway, and J.~W. Grizzle.
\newblock Control lyapunov functions and hybrid zero dynamics.
\newblock In {\em Proc. of 51rd IEEE Conference on Decision and Control}, pages
  6837--6842, 2012.

\bibitem{Aaron2014}
Aaron~D. Ames, Jessy~W. Grizzle, and Paulo Tabuada.
\newblock Control barrier function based quadratic programs with application to
  adaptive cruise control.
\newblock In {\em Proc. of 53rd IEEE Conference on Decision and Control}, pages
  6271--6278, 2014.

\bibitem{Artstein1983}
Zvi Artstein.
\newblock Stabilization with relaxed controls.
\newblock {\em Nonlinear Analysis: Theory, Methods \& Applications},
  7(11):1163--1173, 1983.

\bibitem{Aubin2009}
Jean~Pierre Aubin.
\newblock {\em Viability theory}.
\newblock Springer, 2009.

\bibitem{Boyd2004}
Stephen~P Boyd and Lieven Vandenberghe.
\newblock {\em Convex optimization}.
\newblock Cambridge university press, New York, 2004.

\bibitem{Freeman1996}
R.~A. Freeman and P.~V. Kokotovic.
\newblock {\em Robust Nonlinear Control Design}.
\newblock Birkhauser, 1996.

\bibitem{Galloway2013}
K.~Galloway, K.~Sreenath, A.~D. Ames, and J.W. Grizzle.
\newblock Torque saturation in bipedal robotic walking through control lyapunov
  function based quadratic programs.
\newblock {\em preprint arXiv:1302.7314}, 2013.

\bibitem{Glotfelter2017}
P.~Glotfelter, J.~Cortes, and M.~Egerstedt.
\newblock Nonsmooth barrier functions with applications to multi-robot systems.
\newblock {\em IEEE control systems letters}, 1(2):310--315, 2017.

\bibitem{Hsu2015}
Shao-Chen Hsu, Xiangru Xu, and Aaron~D. Ames.
\newblock Control barrier function based quadratic programs with application to
  bipedal robotic walking.
\newblock In {\em Proc. of the American Control Conference}, pages 4542--4548,
  2015.

\bibitem{Khalil2002}
Hassan~K. Khalil.
\newblock {\em Nonlinear Systems}.
\newblock Prentice Hall, third edition, 2002.

\bibitem{Lindemann2018}
L.~Lindemann and D.~V. Dimarogonas.
\newblock Control barrier functions for signal temporal logic tasks.
\newblock {\em IEEE Control Systems Letters}, 3(1):96--101, 2019.

\bibitem{Nguyen2016}
Quan Nguyen and Koushil Sreenath.
\newblock Exponential control barrier functions for enforcing high
  relative-degree safety-critical constraints.
\newblock In {\em Proc. of the American Control Conference}, pages 322--328,
  2016.

\bibitem{Panagou2013}
D.~Panagou, D.~M. Stipanovic, and P.~G. Voulgaris.
\newblock Multi-objective control for multi-agent systems using lyapunov-like
  barrier functions.
\newblock In {\em Proc. of 52nd IEEE Conference on Decision and Control}, pages
  1478--1483, Florence, Italy, 2013.

\bibitem{Prajna2007}
Stephen Prajna, Ali Jadbabaie, and George~J. Pappas.
\newblock A framework for worst-case and stochastic safety verification using
  barrier certificates.
\newblock {\em IEEE Transactions on Automatic Control}, 52(8):1415--1428, 2007.

\bibitem{Sontag1983}
E.~Sontag.
\newblock A lyapunov-like stabilization of asymptotic controllability.
\newblock {\em SIAM Journal of Control and Optimization}, 21(3):462--471, 1983.

\bibitem{Tee2009}
Keng~Peng Tee, Shuzhi~Sam Ge, and Eng~Hock Tay.
\newblock Barrier lyapunov functions for the control of output-constrained
  nonlinear systems.
\newblock {\em Automatica}, 45(4):918--927, 2009.

\bibitem{Wieland2007}
Peter Wieland and Frank Allgower.
\newblock Constructive safety using control barrier functions.
\newblock In {\em Proc. of 7th IFAC Symposium on Nonlinear Control System},
  2007.

\bibitem{Wisniewski2013}
Rafael Wisniewski and Christoffer Sloth.
\newblock Converse barrier certificate theorem.
\newblock In {\em Proc. of 52nd IEEE Conference on Decision and Control}, pages
  4713--4718, Florence, Italy, 2013.

\bibitem{Wu2015}
Guofan Wu and Koushil Sreenath.
\newblock Safety-critical and constrained geometric control synthesis using
  control lyapunov and control barrier functions for systems evolving on
  manifolds.
\newblock In {\em Proc. of the American Control Conference}, pages 2038--2044,
  2015.

\bibitem{Xiao2019}
Wei Xiao and Calin Belta.
\newblock Control barrier functions for systems with high relative degree.
\newblock In {\em Proc. of 58th IEEE Conference on Decision and Control}, pages
  474--479, Nice, France, 2019.
\newblock available in arXiv:1903.04706.

\bibitem{Xiao2019LCSS}
Wei Xiao, Calin Belta, and Christos~G. Cassandras.
\newblock Feasibility guided learning for robust control in constrained optimal
  control problems.
\newblock In {\em preprint in arXiv:1912.04066}, 2019.

\end{thebibliography}

\end{document}